\documentclass{article}
\usepackage{cite}
\usepackage[english]{babel}														 			
\usepackage{amsmath,amsfonts,amsthm}									
\usepackage[pdftex]{graphicx}														
\usepackage{epstopdf}																
\usepackage{subfig}
\usepackage{booktabs}
\usepackage{footnote}
\usepackage{bbm}
\usepackage{dsfont}
\usepackage{amssymb}

\usepackage{cellspace, booktabs}
\usepackage{xcolor}
\usepackage{multirow, makecell}
\usepackage{rotating}
\usepackage[margin=1.19in]{geometry}
\usepackage{natbib}
\usepackage[utf8]{inputenc}
\usepackage[affil-it]{authblk}

\theoremstyle{plain}
\newtheorem{lem}{Lemma}[section]

\theoremstyle{plain}
\newtheorem{prop}{Proposition}[section]

\theoremstyle{remark}

\title{Prediction of spatial functional random processes:\\
 Comparing functional and spatio-temporal kriging approaches}
 
\author[1,*]{Johan Strandberg}
\author[1]{Sara Sj\"ostedt de Luna}
\author[2]{Jorge Mateu}

\affil[1]{Department of Mathematics and Mathematical Statistics, Ume\aa \ University, Ume\aa, Sweden}
\affil[2]{Department of Mathematics, Universitat Jaume I, Castell\'on, Spain}
\affil[*]{Corresponding author: Johan Strandberg\\
E-mail: johan.strandberg@math.umu.se\\
Tel.: +46-90-786 88 26}

	

\date{\today}
\begin{document}

\maketitle

\begin{abstract}
In this paper, we present and compare functional and spatio-temporal (Sp.T.) kriging approaches to predict spatial functional random processes (which can also be viewed as Sp.T. random processes). Comparisons with respect to computational time and prediction performance via functional cross-validation is evaluated, mainly through a simulation study but also on two real data sets. We restrict comparisons to Sp.T. kriging versus ordinary kriging for functional data (OKFD), since the more flexible functional kriging approaches, pointwise functional kriging (PWFK) and functional kriging total model, coincide with OKFD in several situations. We contribute with new knowledge by proving that OKFD and PWFK coincide under certain conditions.

From the simulation study, it is concluded that the prediction performance for the two kriging approaches in general is rather equal for stationary Sp.T. processes, with a 
tendency for functional kriging to work better for small sample sizes and Sp.T. kriging
to work better for large sample sizes. For non-stationary Sp.T. processes, with a common deterministic time trend and/or time varying variances and dependence structure, OKFD performs better than Sp.T. kriging irrespective of sample size.  For all simulated cases, the computational time for OKFD was considerably lower compared to those for the Sp.T. kriging methods. 
\end{abstract}
{\bf Keywords:} Prediction, Spatial functional random processes, Functional kriging, Spatio-temporal kriging.

\section*{Acknowledgements}
This work was supported by the Swedish Research Council (Project id D0520301) and J.Mateu has been partially funded by grants MTM2016-78917-R from the Spanish Ministery of Science, and P1-1B2015-40 from University Jaume I.

\newpage


%
%






 


\section{Introduction}

In many fields, such as environmental, forestry, climatology, meteorology and medical sciences, the spatial variation of objects in the form of curves are of interest to study. It could e.g. be ocean temperature, salinity or other variables measured over time (or at different depths) at a set of spatial locations. With today's modern technology and huge storage capability, it is in principle possible to observe entire curves by recording them over a dense raster of time (depth) points. In particular it may be of interest to predict a curve at a new spatial location given that such curves have been observed at $n$ other  locations by utilizing the information inherent in the spatial dependence between curves. 

Kriging predictors have a long history of being used to predict objects at new locations based on information observed at a set of other locations, especially for objects that are real- or vector-valued, see e.g. \cite{chiles2009geostatistics}, \cite{cressie2015statistics1}, \cite{cressie2015statistics2}, and references therein. Functional kriging predictors, used when the objects are functions with infinite dimension, was initially discussed by \cite{goulard1993geostatistical}, and further proposed by e.g. \cite{giraldo2010continuous, giraldo2011ordinary} and \cite{nerini2010cokriging}. In these papers, the expected value of the curves is assumed to be independent of the spatial location, so called ordinary functional kriging. More recently,  e.g. \cite{caballero2013universal}, \cite{menafoglio2013universal}, \cite{ignaccolo2014kriging}, and \cite{reyes2015residual}
 has investigated functional kriging methods where the expected value of the curves may also depend on location.
 

A kriging predictor is  a weighted sum of the objects observed at the $n$ spatial locations,  defined to be the best linear unbiased predictor (BLUP) minimizing the mean squared prediction error. 
The optimal kriging weights are functions of the (spatial) dependence structure of the objects,  which in practice needs to be estimated. Typically estimators of the dependence structure rely on stationarity assumptions, unless parametric and distributional assumptions are made. 


Here, two kriging approaches to predict spatial functional random processes are compared. A functional random process 
is a process with stochastic functional objects (curves) $\chi_s=\chi_s(t), t\in T$ over the "time" domain $T$ at each spatial location $s \in D$.  Given that the process has been observed at  $n$ different locations, a curve at a new location $s_0$,  can be predicted by a {\it functional kriging approach}, i.e. as a linear combination of the $n$  observed curves. A spatial functional process can also be viewed as a spatio-temporal (Sp.T.) random process $\{Z(s,t)=\chi_s(t), (s,t)\in D\times T\}$, and hence, a {\it Sp.T. kriging approach} could also be used. The curve $\chi_{s_0}(t), t \in T$ would then be predicted at a dense grid of values over $T$, based on linear combinations of a time-grid of values over the observed curves. The question of which approach, functional or Sp.T. kriging, should be used to analyze a particular data set is an  important one (with no closed answer), as pointed out by \cite{delicado2010statistics}. In this paper we compare the two approaches with respect to prediction performance and computational time, mainly by a simulation study and on two real data sets. Prediction performance is evaluated by functional cross-validation. 

In Section \ref{preliminaries} notation and definitions are given. Section \ref{kriging} presents the functional and Sp.T. kriging approaches, including how to estimate the dependence structure. We also discuss how the functional kriging methods relate to each other, and under which circumstances they may coincide. In particular we state conditions under which the two functional kriging methods ordinary kriging for functional data and pointwise functional kriging coincide, with proofs given in Appendix \ref{pwfkEokfd}. A simulation study, comparing the two kriging approaches, is presented in Section \ref{STT}, see also Appendix \ref{AppendixTables}. Both kriging approaches are applied to the Canadian temperature data (previously analyzed e.g. by \cite{giraldo2009geostatistical}, \cite{giraldo2010continuous} and \cite{menafoglio2013universal}) and to salinity in seawater data (previously analyzed e.g. by \cite{reyes2015residual} and \cite{romano2015performance}), see Section \ref{applications_main}. A discussion and concluding remarks are found in Section \ref{conclusions}.

\section{Preliminaries}
\label{preliminaries}
A {\em spatial functional random process} $\{\chi_s: s\in D \subset \mathbf{R}^d\}$ \citep{giraldo2010continuous,delicado2010statistics}, 
is a process where, for each given $s \in D$, the observed random element is a functional random variable, $\chi_s$,  taking values in an infinite dimensional space (or function space). We will consider the case where $\chi_s$ for every fixed $s$ is a real-valued function, $\chi_s(t), \; t \in T \subset \mathbf{R}$,  from the compact set $T$ to $\mathbf{R}$ and with $s\in D\subset \mathbf{R}^2$. It is usually assumed that the realizations of the curves (functions)  $\chi_s(t), t \in T, s\in D$ belong to a separable Hilbert space $\mathbf H$ of square integrable functions defined on $T$. 

A spatial functional random process is second-order stationary if for each $t\in T$ the corresponding spatial random process $\{\chi_s(t), s\in D\}$ is second-order stationary. 
For (second-order) stationary functional processes,  the covariance function (covariogram) satisfies $Cov[\chi_{s}(r),\chi_{v}(t)]=C(s-v, r, t)$, which can be described by the variogram 
\begin{equation*}
V[\chi_{s}(r)-\chi_{v}(t)]=2\gamma(s-v, r, t),
\end{equation*}
 via the relation
\begin{equation}
C(s-v, r, t)=\sigma^2(r)+\sigma^2(t)-\gamma(s-v, r, t),
\label{CovVar}
\end{equation}
where $\sigma^2(t)=V[\chi_{s}(t)]$. Our main focus is on {\em second-order isotropically stationary} spatial functional random processes, satisfying
\begin{itemize}
\item[(i)] $E[\chi_s(t)]=m(t)$ and $V[\chi_s(t)]=\sigma^2(t)$ $\forall s\in D$ and $\forall t\in T,$ \hfill\refstepcounter{equation}(\theequation) \label{def1}
\item[(ii)] $Cov[\chi_{s}(r),\chi_{v}(t)]=C(\|s-v\|, r, t)$ $\forall s, v\in D$ and $\forall r, t \in T$, 
\end{itemize}
where $\|\cdot\|$ denotes the (Euclidean) distance measure. For any given $t \in T$, 
$\gamma_t(h):=V[\chi_{s}(t)-\chi_{v}(t)]/2$, $h=\|s-v\|$, is the semivariogram of the spatial random process $\{\chi_s(t): s \in D\}$. In order to ensure that $V[\sum_{i=1}^nl_i\chi_{s_i}(t)]\ge0$ for any set of constants $l_1,\ldots, l_n \in R, n=1,2,... $, the variogram (as a function of $h$) needs to be a conditional negative definite function and the covariogram needs to be a positive definite function, see e.g. \cite{cressie2015statistics1}.

A spatial functional random process can also be viewed as a Sp.T. process $Z(s,t)=\chi_s(t)$, where $Z(s,t)$ takes values in $\mathbf{R}$, and is mapped from $(s,t) \in D \times T$, cf. \cite{cressie2015statistics2}. A Sp.T. process is 
said to be {\em second-order stationary and spatially isotropic} if 
\begin{itemize}
\item[(i)] $E[Z(s,t)]=m$ and $V[Z(s,t)]=\sigma_Z^2$ $\forall s\in D$ and $\forall t\in T,$\hfill\refstepcounter{equation}(\theequation) \label{def1SpT}

\item[(ii)] $Cov[Z(s,r),Z(v,t)]=C_Z(\|s-v\|, \mid r-t\mid)$ $\forall s, v\in D$ and $\forall r, t \in T$. \end{itemize}
Note that the class of stationary Sp.T processes is a subset of the class of stationary functional random processes, since
a  stationary Sp.T. process implies that the corresponding functional random process also is stationary, while the opposite may not be true.

\section{Kriging prediction}
\label{kriging}
In this section two kriging approaches to predict spatial functional random processes are described. Section \ref{functional_methods} presents different functional kriging methods, and under which circumstances they may coincide. Section \ref{sptk} describes the Sp.T. kriging approach. A way to evaluate prediction performance using functional cross-validation is given in Section \ref{evaluation}. 

\subsection{Functional kriging}
\label{functional_methods}


For the presentation below, unless otherwise stated, we will assume that the spatial functional random process is second-order stationary and isotropic.  Within the functional kriging framework, it is of interest to predict the complete random function $\chi_{s_0}(t), t \in  T$, at a new location $s_0$, given that a sample of random functions has been observed at $n$ different locations, $s_1, \ldots, s_n$. 
A functional kriging predictor, 
$\hat{\chi}_{s_0}(t), \; t \in  T$, is defined to be the best linear unbiased predictor (BLUP) minimizing the mean integrated squared error (MISE)
\begin{equation}
\text{MISE}(s_0)=E \Big\lbrack\int_{T} (\hat{\chi}_{s_0}(t)- \chi_{s_0}(t))^2dt\Big\rbrack.
\label{eqn:mise}
\end{equation}

\subsubsection{Ordinary kriging for functional data}
\label{okfd}
\cite{goulard1993geostatistical} proposed one of the first functional kriging predictors, the so called {\em curve kriging predictor} 
\begin{equation}
\hat{\chi}_{s_0}(t)=\sum_{i=1}^n\lambda_i \chi_{s_i}(t), \; \; t\in T.
\label{eqn:OKFD}
\end{equation}
This predictor was further discussed by \cite{giraldo2007geostatistics, giraldo2011ordinary} and there given the name of {\em ordinary kriging for functional data} (OKFD). 
The optimal kriging weights $\lambda_1,...,\lambda_n \in \mathbf{R}$  are chosen such that $\text{MISE}(s_0)$
is minimized given that the predictor is unbiased. It turns out that the optimal $\lambda_i$'s only depend on the {\em (isotropic) trace-semivariogram}, being defined as

\begin{equation}
\gamma(h_{ij})=\frac{1}{2}E\Big[ \int_{T} (\chi_{s_i}(t)-\chi_{s_j}(t))^2dt\Big]=\int_{T} \gamma_t(h_{ij})dt, \hspace{5pt} \forall s_i, s_j\in D,\label{ekv3}
\end{equation}
where $h_{ij}=\|s_i-s_j\|$. The second equality holds by Fubini's theorem under the assumption that the realizations of the random functions are square integrable. For a detailed derivation of the optimal weights, see \cite{giraldo2011ordinary}. The trace-semivariogram often satisfies the  properties of a classical semivariogram, being a conditional negative definite function  \citep{menafoglio2013universal}.



The trace-variogram is in practice unknown and therefore needs to be estimated from the data. Under assumption \eqref{def1},  a (consistent) method of moments estimator of the trace-semivariogram \eqref{ekv3} can be formed for a set of $h$-values as
\begin{equation}
\hat{\gamma}(h)=\frac{1}{2\vert N(h)\vert} \sum_{i,j\in N(h)}\int_{T}(\chi_{s_i}(t)-\chi_{s_j}(t))^2dt,
\label{ekv4}
\end{equation}
where $N(h)=\{ (s_i,s_j): \| s_i-s_j\|=h\},$ and $\vert N(h)\vert$ is the number of distinct elements in $N(h)$. For irregularly spaced observations, it is rare to have several pairs of observations separated at exactly distance $h$ and then  $N(h)$ is modified to $\{ (s_i,s_j): \| s_i-s_j\|\in(h-\epsilon,h+\epsilon)\},$ with $\epsilon>0$ being some small positive value, in order to obtain a more stable estimate. To obtain a valid (variogram) estimate for any $h$, a parametric variogram model $\gamma(h\mid\theta)$, e.g. the spherical, exponential or stable model, is fitted to a set of estimated values $\{\hat{\gamma}(h_l),h_l \}$, $l=1,...,L$, by a least squares method, cf. \cite{cressie2015statistics1}. Here, the ordinary least squares (OLS) method is used to estimate $\theta$. 

The random functions, $\chi_{s_i}(t)$, are typically observed only at a finite number of time points $t_{i1}, \ldots, t_{im_i},$ $i=1, \ldots, n$. \cite{goulard1993geostatistical} suggested to fit a parametric model $\chi_{s_i}(\cdot,\alpha_i)$ to the observed values and replace  $\chi_{s_i}(t)$ by $\chi_{s_i}(t,\hat{\alpha}_i)$ in \eqref{eqn:OKFD} and \eqref{ekv4}. A non-parametric approach was suggested by \cite{giraldo2011ordinary}, where the observed random functions  are represented (approximated) by linear combinations of $p$ known basis functions, ${\bf B}(t)=(B_1(t), \ldots, B_p(t))^{\intercal}$, as 
\begin{equation}
\tilde{\chi}_{s_i}(t)=\sum_{k=1}^p a_{ik}B_k(t)={\bf a}_i^{\intercal}{\bf B}(t).  
\label{eqn:Xbasisexp}
\end{equation}
The basis functions could e.g. be B-splines, Fourier basis, or wavelets. The coefficients ($\bf{a}_i$'s) can typically be determined by the least squares method. 
\cite{giraldo2011ordinary} suggested to chose the number of basis functions $p$ by  cross-validation. In the final  ordinary kriging predictor \eqref{eqn:OKFD}, the estimated trace-variogram values are plugged into the kriging weights ($\lambda_i$'s), and the $\tilde{\chi}_{s_i}(t)$'s replacing the 
$\chi_{s_i}(t)$'s.

\subsubsection{Pointwise functional kriging}
\label{secPWFK}

\cite{pwfk2008, giraldo2010continuous} suggested the {\it point-wise functional kriging predictor} (PWFK), to allow more flexibility than the OKFD predictor \eqref{eqn:OKFD}.  
It allows the $\lambda_i$'s to depend on $t$, and is defined as 
\begin{equation*}
\hat{\chi}_{s_0}(t)=\sum_{i=1}^n \lambda_i (t)\chi_{s_i}(t), \hspace{10pt} t \in T.  
\end{equation*}
The best linear unbiased predictor minimizing the mean squared integrated prediction error is found by choosing the $\lambda_i(t)$-functions such that (\ref{eqn:mise}) is minimized subject to the unbiasedness constraint of the predictor, $\sum_{i=1}^n \lambda_i(t)=1$, for all $t \in T.$ In order to solve the optimization problem, the $\lambda_i(t)$-functions are represented by a linear combination of $K$ known basis functions,
\begin{equation}
\lambda_{i}(t)=\sum_{k=1}^Kb_{ik}B_{\lambda k}(t)=\mathbf{b}_i^{\intercal}\mathbf{B}_{\lambda}(t), \hspace{10pt} i=1,...,n,
\label{eqn:Lbasisexp}
\end{equation}
where the $\mathbf{b}_i$'s are to be determined. Moreover, the $\chi_{s_i}(t)$'s are represented as in (\ref{eqn:Xbasisexp}), implying that $E[\chi_{s_i}(t)]=E[\mathbf{a}_i]^{\intercal}\mathbf{B}(t)$ and 
$Cov[\chi_{s_i}(t),\chi_{s_j}(u)]=\mathbf{B}(t)^{\intercal} Cov[ \mathbf{a}_i,\mathbf{a}_j]\mathbf{B}(u)$. The optimization problem then reduces the infinite dimensional problem to a multivariate geostatistics problem. Given that the weights satisfy \eqref {eqn:Lbasisexp}, the unbiasedness condition implies that  
\begin{equation}
\sum_{i=1}^n \lambda_i(t)=\sum_{i=1}^n\mathbf{b}_i^{\intercal}\mathbf{B}_{\lambda}(t)=\mathbf{c}^{\intercal}\mathbf{B}_{\lambda}(t)=1, \text{for all } t \in T,
\label{Blambdac}
\end{equation}
where $\mathbf{c}=\sum_{i=1}^n\mathbf{b}_i$. Hence, only basis functions $\mathbf{B}_{\lambda}(t)$ that satisfy \eqref{Blambdac} for some constant vector $\mathbf{c}$ give admissable solutions to the kriging optimization problem. When $\mathbf{B}_{\lambda}(t)$ are B-splines, \eqref{Blambdac} is fulfilled when $\mathbf{c}=\mathbf{1}$,  and for Fourier basis functions when $\mathbf{c}=(1, 0, \ldots, 0)^{\intercal}$. In fact any set of basis functions where one (the first say) basis function is a constant, $B_{\lambda 1}(t)=k$, satisfies \eqref{Blambdac} for $\mathbf{c}=(1/k, 0, \ldots, 0)^{\intercal}$. 
The full derivation of the equation system to be solved in order to find the $\mathbf{b}_i$'s, for admissable choices of $\mathbf{B}_{\lambda}(t)$ satisfying \eqref{Blambdac}, is given by \cite{giraldo2010continuous} when $\mathbf{B}_{\lambda}(t)=\mathbf{B}(t)$, and for general $\mathbf{B}_{\lambda}(t)$ in Appendix \ref{pwfkEokfd}.

The  $\mathbf{b}_i$'s turn out to be functions of the covariances between the various $\mathbf{a}_i$'s, which in practice are not known and thus need to be estimated. 
If $\mathbf{a}_i=\mathbf{a}(s_i)$, 
and $\mathbf{a}(s)=[a_1(s), \ldots, a_p(s)]^{\intercal}$
 is a $p$-variable second-order isotropically stationary spatial random field for all $s \in D$, with $E[\mathbf{a}(s)]=\mathbf{m}_a$ and $Cov[\mathbf{a}(s_i),\mathbf{a}(s_j)]=\Sigma(\|s_i-s_j\|) =\{c_{kl}(h_{ij})\}   \in \mathbf{R}^{p\times p}$, where $c_{kl}(h_{ij})=Cov[a_{k}(s_i),a_{l}(s_j)]$,  $h_{ij}=\|s_i-s_j\|$, it follows that
 $\{\chi_s(t) =\mathbf{a}(s)^{\intercal}\mathbf{B}(t), s \in D, t \in T\}$ satisfies \eqref{def1}.
 Under this assumption \cite{giraldo2010continuous} suggest estimating the covariograms and crosscovariograms (the $c_{kl}(\cdot)$'s) via a linear model of coregionalization \citep{goulard1992linear}. This means that $\mathbf{a}(s)$ can be expressed as $\mathbf{a}(s)=\mathbf{P}\mathbf{r}(s)$ where $\mathbf{P} \in \mathbf{R}^{p\times q} $ and $\mathbf{r}(s)=(r_1(s), \ldots, r_q(s))^{\intercal}$  are $q$ latent univariate (second-order isotropically stationary) random fields, typically assumed to be independent. 
 Given available data, $\mathbf{a}_i=\mathbf{a}(s_i), i=1,\ldots n$, the $c_{kl}(\cdot)$'s (and $\mathbf{P}$) can be estimated using the R-package gstat \citep{pebesma2004multivariable}. In order to perform the estimation, the value of $q$ and the variogram models of the ${r}_i(s)$'s need to be specified. 

The PWFK may have the potential to give better prediction performance than OKFD since it allows more flexible kriging weights. 
In which situations this could be true is still not completely known. In the following proposition (see Appendix \ref{pwfkEokfd} for the proof) we have confirmed situations in which PWFK and OKFD do coincide. 

\begin{prop}
\label{prop}
Suppose that the $\chi_{s_i}(t)$'s can be represented by (\ref{eqn:Xbasisexp}) and that $\mathbf{a}_i=\mathbf{a}(s_i)$ follows a linear model of coregionalization with $q$ independent second-order stationary latent univariate spatial random fields with common variogram. Further assume that the $\mathbf{B}_{\lambda}(t)$'s satisfy \eqref{Blambdac} for some constant vector $\mathbf{c}$, and that the inverse of the matrix $G$ exists, where
$$\mathbf{G}=\int_{T} \mathbf{B}^{\intercal}(t)\mathbf{P}\mathbf{P}^{\intercal}\mathbf{B}(t)\mathbf{B}_{\lambda}(t)\mathbf{B}_{\lambda}^{\intercal}(t)dt.$$
Then the optimal kriging weights \eqref{eqn:Lbasisexp} of PWFK that minimizes \eqref{eqn:mise} satisfy $\lambda_i(t)=\lambda_i$, with $\mathbf{b}_{i}=\lambda_i\mathbf{c}$, for all $i=1, \ldots, n$, and thus coincide with those of OKFD.

\end{prop}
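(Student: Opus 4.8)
The plan is to reduce the PWFK minimisation of the MISE \eqref{eqn:mise} to a finite-dimensional quadratic program in the coefficient vectors $\mathbf{b}_1,\ldots,\mathbf{b}_n$ and to show that its minimiser collapses onto the OKFD weights. The engine of the argument is the separability that the common-variogram assumption forces on the covariance: writing the latent fields' common covariance function as $c(h)$, their independence in the linear model of coregionalization gives $Cov[\mathbf{a}(s_i),\mathbf{a}(s_j)]=c(h_{ij})\,\mathbf{P}\mathbf{P}^{\intercal}$, so that by \eqref{eqn:Xbasisexp}
\begin{equation*}
Cov[\chi_{s_i}(t),\chi_{s_j}(u)]=c(h_{ij})\,\mathbf{B}(t)^{\intercal}\mathbf{P}\mathbf{P}^{\intercal}\mathbf{B}(u),
\end{equation*}
i.e. the scalar spatial factor $c(h_{ij})$ detaches from the purely functional factor. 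First I would use \eqref{def1} together with the unbiasedness constraint to replace the expectation in \eqref{eqn:mise} by the pointwise prediction variance, then substitute \eqref{eqn:Xbasisexp} and \eqref{eqn:Lbasisexp} and integrate over $T$. Writing $\phi(t)=\mathbf{B}(t)^{\intercal}\mathbf{P}\mathbf{P}^{\intercal}\mathbf{B}(t)$, each cross term $\int_T \phi(t)\lambda_i(t)\lambda_j(t)\,dt$ becomes $\mathbf{b}_i^{\intercal}\mathbf{G}\mathbf{b}_j$ with precisely the matrix $\mathbf{G}$ of the statement, and with $\mathbf{g}=\int_T \phi(t)\mathbf{B}_{\lambda}(t)\,dt$ one obtains
\begin{equation*}
\text{MISE}(s_0)=\sum_{i,j} c(h_{ij})\,\mathbf{b}_i^{\intercal}\mathbf{G}\mathbf{b}_j-2\sum_i c(h_{i0})\,\mathbf{b}_i^{\intercal}\mathbf{g}+\text{const}.
\end{equation*}

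Next I would minimise this over $\mathbf{b}_1,\ldots,\mathbf{b}_n$ subject to the vector form $\sum_i\mathbf{b}_i=\mathbf{c}$ of the unbiasedness constraint \eqref{Blambdac}. Introducing a single $\mathbf{R}^{K}$-valued multiplier $\boldsymbol{\mu}$ and differentiating in each $\mathbf{b}_k$ gives the stationarity system $\sum_j c(h_{kj})\,\mathbf{G}\mathbf{b}_j=c(h_{k0})\,\mathbf{g}+\boldsymbol{\mu}$, $k=1,\ldots,n$. Using invertibility of $\mathbf{G}$, and of the spatial covariance matrix $C=\{c(h_{ij})\}$ (as in the ordinary kriging system), I would read this as a single matrix equation in the $K\times n$ array of the $\mathbf{b}_j$'s and invert $C$ to obtain
\begin{equation*}
\mathbf{b}_j=\alpha_j\,\mathbf{G}^{-1}\mathbf{g}+\beta_j\,\mathbf{G}^{-1}\boldsymbol{\mu},\qquad \boldsymbol{\alpha}=C^{-1}\mathbf{c}_0,\ \ \boldsymbol{\beta}=C^{-1}\mathbf{1},
\end{equation*}
where $\mathbf{c}_0=(c(h_{10}),\ldots,c(h_{n0}))^{\intercal}$ and $\alpha_j,\beta_j$ are exactly the quantities $C^{-1}\mathbf{c}_0$ and $C^{-1}\mathbf{1}$ that build the OKFD weights in covariance form.

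The decisive step is to identify $\mathbf{G}^{-1}\mathbf{g}$, and this is where the admissibility condition \eqref{Blambdac} enters. Since $\mathbf{B}_{\lambda}(t)^{\intercal}\mathbf{c}=1$ for all $t$,
\begin{equation*}
\mathbf{G}\mathbf{c}=\int_T \phi(t)\,\mathbf{B}_{\lambda}(t)\big(\mathbf{B}_{\lambda}(t)^{\intercal}\mathbf{c}\big)\,dt=\int_T \phi(t)\,\mathbf{B}_{\lambda}(t)\,dt=\mathbf{g},
\end{equation*}
so $\mathbf{G}^{-1}\mathbf{g}=\mathbf{c}$. Substituting this and imposing $\sum_j\mathbf{b}_j=\mathbf{c}$ forces $\mathbf{G}^{-1}\boldsymbol{\mu}=\tilde{\mu}\,\mathbf{c}$ with $\tilde{\mu}=(1-\mathbf{1}^{\intercal}\boldsymbol{\alpha})/(\mathbf{1}^{\intercal}\boldsymbol{\beta})$, which is exactly the OKFD Lagrange multiplier. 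Hence $\mathbf{b}_j=(\alpha_j+\tilde{\mu}\beta_j)\mathbf{c}=\lambda_j\mathbf{c}$ with $\lambda_j$ the OKFD weight, and \eqref{Blambdac} again yields $\lambda_j(t)=\mathbf{b}_j^{\intercal}\mathbf{B}_{\lambda}(t)=\lambda_j\,\mathbf{c}^{\intercal}\mathbf{B}_{\lambda}(t)=\lambda_j$, so the two predictors coincide. I expect the main obstacle to be the careful handling of the matrix-valued stationarity system and recognising its separable structure; once the covariance factorisation is in hand, the crux is the short observation that \eqref{Blambdac} makes $\mathbf{c}$ the $\mathbf{G}$-preimage of $\mathbf{g}$, which annihilates the extra functional degrees of freedom and leaves only the scalar OKFD weights.
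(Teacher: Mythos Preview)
Your argument is correct and reaches the same conclusion as the paper, but by a genuinely different and more direct route. The paper first proves an auxiliary lemma (their Lemma~A.1) describing, by induction, the block structure of the inverse of the PWFK system matrix $\mathbf{Q}$; it then reads off from $\mathbf{Q}^{-1}\mathbf{J}$ that each $\mathbf{b}_i$ is a linear combination of $\mathbf{G}^{-1}\mathbf{J}_j$ and $\mathbf{c}$, and finally uses the constraints $\sum_j k_{ij}=0$ together with $\mathbf{B}_\lambda(t)^{\intercal}\mathbf{c}=1$ to collapse this to $\mathbf{b}_i=\lambda_i\mathbf{c}$. You bypass the block-inverse lemma entirely by exploiting the separability $Cov[\mathbf{a}_i,\mathbf{a}_j]=c(h_{ij})\mathbf{P}\mathbf{P}^{\intercal}$ up front, so that the stationarity equations factor as a Kronecker system and can be solved by inverting the spatial matrix $C=\{c(h_{ij})\}$ and the functional matrix $\mathbf{G}$ separately. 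Your decisive observation $\mathbf{G}\mathbf{c}=\mathbf{g}$ is the same mechanism as the paper's use of $\mathbf{B}_\lambda(t)^{\intercal}\mathbf{c}=1$ in its final step, just invoked earlier and more cleanly. The trade-off is that you work in covariance form and need $C$ invertible, whereas the paper works in variogram form and hides the analogous nondegeneracy inside the existence of $\mathbf{Q}^{-1}$; both are standard kriging regularity assumptions. Your approach is shorter, avoids the induction, and makes the OKFD weights $\lambda_j=\alpha_j+\tilde{\mu}\beta_j$ appear transparently as the ordinary-kriging solution for the spatial covariance $C$.
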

\cite{giraldo2010continuous} kindly permitted us to use their R-code to estimate and predict PWFK models. On all simulated and real data sets we considered, the computational time for PWFK turned out to always be substantially larger than for OKFD, estimated by the R-package geofd \citep{Giraldo2012geofd}. Furthermore, we found a bug in their code, related to the numerical integration part. After correction of this bug, the estimated PWFK kriging weights always became constant (i.e. $\lambda_i(t)=\lambda_i$) when $\mathbf{B}_{\lambda}(t)=\mathbf{B}(t)$ were cubic B-splines or Fourier basis functions. 


\subsubsection{Functional kriging total model}
\label{secFKTM}
\cite{giraldo2009geostatistical, giraldo2014cokriging}, and independently \cite{nerini2010cokriging}, proposed a third functional kriging method, that allows to use all time points of the observed functions in the prediction of $\chi_{s_0}(t), \; t \in T$. The method is called the {\em functional kriging total model} (FKTM), and the predictor is defined as 
\begin{equation}
\hat{\chi}_{s_0}(t)=\sum_{i=1}^n \int_{T}\lambda_i (t, v)\chi_{s_i}(v)dv, \hspace{10pt} t \in T.  
\label{FKTM}
\end{equation}
This modeling approach is coherent with the functional linear model for functional responses (total model) introduced by \cite{ramsay2005functional}. Assuming that the random functions $\chi_{s_i}(t)$'s satisfy (\ref{eqn:Xbasisexp})  and that the kriging weights  satisfy
\begin{equation*}
\lambda_{i}(t,v)=\sum_{k=1}^p\sum_{l=1}^pc_{ik}^lB_{k}(t)B_{l}(v)=\mathbf{B}(t)^{\intercal}\mathbf{C}_i\mathbf{B}(v), \hspace{10pt} i=1,...,n,
\end{equation*}
\cite{giraldo2014cokriging} proposed a way to determine the $\lambda_i (t, v)$'s (i.e. the $\mathbf{C}_i$'s) such that the predictor (\ref{FKTM}) is unbiased and minimizes \eqref{eqn:mise}. Also here, the $\mathbf{C}_i$'s  turn out to be functions of the covariances between the various $\mathbf{a}_i$'s, which in practice are not known and can be estimated as proposed in Section \ref{secPWFK}. See \cite{giraldo2014cokriging} for more detailes.
The FKTM method is computationally heavy compared to OKFD, just like the PWFK method \citep{giraldo2009geostatistical}. Moreover, \cite{menafoglio2016kriging} showed that if the realizations of $\chi_s(t)$ belong to the Hilbert space of square integrable functions on $T$, and the functional second-order stationary random process is Gaussian, then the kriging weights of FKTM and OKFD agree a.s. for any orthonormal base $\mathbf{B}(t)$.


\subsection{Spatio-temporal kriging}
\label{sptk}
Since a spatial functional process also can be viewed as a Sp.T. process, $Z(s,t)=\chi_{s}(t)$, taking values in $(s,t) \in D \times T$, it could also be predicted by Sp.T. kriging methods. Given the observed values $Z(s_i,t_{ij}), j=1, \ldots, m_i, i=1, \ldots, n$, the Sp.T. kriging predictor at location $s_0$ and time point  $t \in T$, is defined to be the best linear unbiased predictor (BLUP), 
\begin{equation}
\hat{Z}(s_0,t)= \sum_{i=1}^n\sum_{j=1}^{m_i}\lambda_{ij} ^tZ(s_i,t_{ij}),
\label{SpTkrigingP}
\end{equation}
minimizing the mean squared prediction error (MSPE)
\begin{equation}
\text{MSPE}(s_0,t)=E[(\hat{Z}(s_0,t)- Z(s_0,t))^2].
\label{eqn:MSPE}
\end{equation}
Note that  for each $s_0$, the Sp.T. kriging weights ($\lambda_{ij}^t$'s) are allowed to change for each $t \in T$.  When the mean value of the process is constant, the unbiasedness condition implies that  $ \sum_{i=1}^n\sum_{j=1}^{m_i} \lambda_{ij}^t =1$. Moreover, if the constant mean value of the process is unknown, the kriging weights depend on the Sp.T. covariance structure solely, and \eqref{SpTkrigingP} is referred to as the so called  {\em Sp.T. ordinary kriging predictor}, see e.g. \cite{cressie2015statistics2}. The dependence structure in practice  needs to be estimated from the data and is then plugged into the kriging weights ($\lambda_{ij}^t$'s). For second-order stationary and spatially isotropic, Sp.T. processes satisfying \eqref{def1SpT}, the dependence structure, given by the (spatially isotropic) Sp.T. variogram,  
\begin{equation*}
E[(Z(s,r)-Z(v,t))^2]=2\gamma_Z (\|s-v\|,|r-t|), \hspace{5pt} s, v \in D \hspace{5pt} \text{and}\hspace{5pt} r, t \in T,
\end{equation*} 
is typically estimated via the following steps: First, an empirical (spatially isotropic) Sp.T. semivariogram is computed from lag classes as
\begin{equation*}
\hat{\gamma}_Z(h,u)=\frac{1}{2\vert N(h,u)\vert}\sum_{(i, j,k,l)\in N(h,u)}(Z(s_i,t_{ik})-Z(s_j,t_{jl}))^2, 
\end{equation*} 
where $N(h,u)=\{ (s_i,t_{ik}), (s_j,t_{jl}): \| s_i-s_j\|\in(h-\epsilon,h+\epsilon),$ and  $\vert t_{ik}-t_{jl}\vert\in(u-\delta,u+\delta)\},$ for some $\epsilon, \delta >0$, and $\vert N(h,u)\vert$ is the number of distinct elements in $N(h,u)$.  A parametric semivariogram model, ${\gamma}(h,u\vert \theta)$, is then fitted to a set of $\{\hat{\gamma}(h_l,u_l), (h_l,u_l)\},  l=1, \ldots, L$ by a least squares method. 
Three commonly used types of stationary Sp.T. semivariogram (covariogram) models to estimate the Sp.T.  dependence structure are the separable, product-sum and metric models.  \cite{graler2016spatio} show how Sp.T. ordinary kriging prediction can be performed with these three models using the R-package gstat.

 {\em The separable model} assumes that the Sp.T. covariance function can be modeled by the product of the spatial and the temporal covariance functions, 
\begin{equation}
C_Z(h,u)=C_s(h)C_t(u).
\label{Covseparable}
\end{equation}
This model has the computational advantage of being able to express the covariance matrix as the Kronecker product between two covariance matrices (space and time) which simplifies and speeds up the computation of its determinant and inverse. 
{\em The product-sum model} is an extension of the separable model, where the covariance function is of the form,
\begin{equation*}
C_Z(h,u)=kC_s(h)C_t(u)+C_s(h)+C_t(u),
\end{equation*}
with $k>0$. 
{\em The metric Sp.T. covariance model} is given by
\begin{equation*}
C_Z(h,u)=C_{joint}(\sqrt{h^2+(\kappa u)^2}).
\end{equation*}
To treat the spatial and temporal distances equally, the spatial and temporal dimensions are matched by an anisotropy parameter $\kappa$. 
Note that when $\kappa=1$,  this covariance model corresponds to an isotropic second-order stationary random process in $\mathbf{R}^3$.%

More generally, in Sp.T. kriging modeling, the process is often described as 
\[
Z(s,t)= \mu(s,t) + \epsilon(s,t),
\]
where $\mu(s,t)$ is a deterministic trend, and $\epsilon(s,t)$ is a mean zero Sp.T. random field, usually assumed stationary. 
The trend is typically modeled by
\begin{equation}
\mu(s,t)=\boldsymbol{\beta}^{\intercal}\textbf{x}(s,t),
\label{eqn:UKtrend}
\end{equation}
where $\textbf{x}(s,t) \in \mathbf{R}^M$ is a set of $M$ known covariates, often chosen to be polynomials of $s$ and $t$, and $\boldsymbol{\beta} \in \mathbf{R}^M$ is an unknown parameter to be determined.  When the Sp.T. process has a deterministic (unknown) non-constant trend of the form \eqref{eqn:UKtrend}, then the BLUP \eqref{SpTkrigingP} that minimizes \eqref{eqn:MSPE} is called the {\em Sp.T. universal kriging predictor}, and the kriging weights are functions of both the dependence structure and the covariates evaluated at the observed and predicted locations, see e.g.  \cite{cressie2015statistics2} Section 4.1.2, page 148. An iterative weighted least squares method may be used to estimate $\boldsymbol{\beta}$ and the Sp.T. variogram parameter $\theta$. Firstly, $\boldsymbol{\beta}$ would be estimated by the OLS method, minimizing 
\begin{equation*}
\sum_{i=1}^n \sum_{j=1}^{m_i} (Z(s_i,t_{ij})-\boldsymbol{\beta}^{\intercal}\mathbf{x}(s_i,t_{ij}))^2.
\end{equation*}
Based on the resulting regression residuals, the Sp.T. semivariogram is then estimated by fitting a parametric Sp.T. semivariogram model to the corresponding empirical Sp.T. semivariogram by a least squares method. The parameter $\boldsymbol{\beta}$ is then re-estimated using a weighted least squares method, taking into account the estimated dependence structure of the residuals \citep{cressie2015statistics1}. The dependence structure (variogram) is again estimated based on the updated residuals, and the whole procedure iterated until convergence. 
Note that if the deterministic trend only depends on time, such that $\mu(s,t)=m(t)$, the functional kriging methods do not need to specify and estimate the trend, whereas the Sp.T. kriging methods need to. 

\subsection{Evaluation of kriging methods}
\label{evaluation}
 {\em Functional cross-validation} (FCV) is a common way of evaluating the prediction performance of prediction methods for functional data, as suggested by \cite{giraldo2010continuous, giraldo2011ordinary}. In FCV,  the data from each observed spatial location is removed, one at a time, and then predicted at all observed time points by the prediction method using the observed functional data at the remaining locations. The mean squared prediction error (MSPE)  is computed as 
\begin{equation}
\text{MSPE}=\frac{1}{n}\sum_{i=1}^n \sum_{j=1}^{m_i}(Z(s_i,t_{ij})-\hat{Z}^{-i}(s_i,t_{ij}))^2/m_i,
\label{fcv}
\end{equation}
where  $\hat{Z}^{-i}(s_i,t_{ij})$ denotes the predicted value at location $(s_i,t_{ij})$ based on the functional data with the observations  $Z(s_i,t_{ij})$, 
$j=1, \ldots, m_i$ excluded.

\section{A simulation study}
\label{STT}
Here we present a simulation study that aims to shed light over the relative merits of Sp.T. and functional kriging, with particular focus on Gaussian second-order stationary functional processes in $\mathbf{R}^2$. Since the functional kriging methods OKFD, PWFK, and FKTM often coincide for such processes (see Sections \ref{secPWFK} and \ref{secFKTM}) we restrict our comparisons to Sp.T. kriging versus OKFD. We simulate data from Gaussian processes with three main types of covariance structures. The first two scenarios have stationary isotropic separable and non-separable covariance functions, respectively. The third scenario corresponds to second-order stationary functional (but non-stationary Sp.T.) processes with constant mean. For all three scenarios, several  different cases are simulated, with varying strengths of spatial and temporal dependence. All the (24) considered cases in the study are presented in Table \ref{tab_different_cases}, where the different parameters control the Sp.T. correlation structure in the three different main scenarios. 
\begin{table}[!htbp]
\small
\caption{The 24 different types (cases) of simulated Gaussian processes and their parameters: isotropic second-order stationary Sp.T. processes with separable (cases 1-9) and non-separable (cases 10-18) covariance functions, and second-order stationary functional (but non-stationary Sp.T.) processes (cases 19-24) with constant means. The larger the value of $\alpha$ and $\beta$ the weaker the spatial and temporal correlation, respectively.}

\centering
\begin{tabular}{cccc|cccc|cccc}

   \midrule
\multicolumn{4}{c|}{Generated data}& \multicolumn{4}{c|}{Generated data} & \multicolumn{4}{c}{Generated data} \\

  \multicolumn{1}{c}{Case} & \multicolumn{1}{c}{Type}  & $\alpha$ & $\beta$ &  \multicolumn{1}{c}{Case} & \multicolumn{1}{c}{Type}  & $\alpha$ & $\beta$ &  \multicolumn{1}{c}{Case}&  \multicolumn{1}{c}{Type} & \#bases (p) & $\alpha$  \\
  \midrule \addlinespace

 1& \multicolumn{1}{c}{\multirow{10}*{\rotatebox[origin=c]{90}{Separable}}} & & 0.1 & 10& \multicolumn{1}{c}{\multirow{10}*{\rotatebox[origin=c]{90}{Non-Separable}}} & & 0.1 & 19& \multicolumn{1}{c}{\multirow{10}*{\rotatebox[origin=c]{90}{Non-Stationary}}} & & 0.1  \\
                                                    2& & 0.1 & 1 &11& & 0.1 & 1 &20& & 7 & 0.5  \\
                                                    3& & & 10 &   12& & & 10&   21& & & 2  \\ \addlinespace
                                                    
                                                    4& & & 0.1 & 13& & & 0.1 & 22& & & 0.1   \\
                                                    5& & 0.5 & 1 &   14& & 0.5 & 1 &   23& & 15 & 0.5  \\
                                                    6& & & 10  &  15& & & 10  &  24& & & 2  \\ \addlinespace
                                                      
                                                    7& & & 0.1 &  16& & & 0.1 &  & & &  \\
                                                    8& & 2 & 1 & 17& & 2 & 1 & & &  &   \\
                                                    9& & & 10 & 18& & & 10 & & & &  \\ \addlinespace

\midrule
\end{tabular}
\label{tab_different_cases}
\normalsize
\end{table}
For each case in Table \ref{tab_different_cases}, three different sample sizes were considered; $small$ referring to $n=6\times 6$ spatial locations and $m=12$ time points, $medium$ referring to $n=6\times 6$ spatial locations and $m=50$ time points, and $large$ referring to $n=15\times 15$ spatial locations and $m=50$ time points. For each sample size, the number of time points were equally distributed on $[0,1]$ and the spatial locations were located on a regular grid in $[0,1]\times [0,1]$. Moreover, for cases 1-18, the presence of a deterministic time trend,  $m(t)=9+3\sin(2\pi t)$, was also investigated. For each case, sample size (and trend type for cases 1-18), 100 replicates were simulated. Figure \ref{sim_data} illustrates examples of simulated data for six of the cases, all with constant means. The three main scenarios  are now presented in more detail, together with the simulated results.
\begin{figure}[!htb]
\begin{center}
\centerline{}
\includegraphics[width=\textwidth]{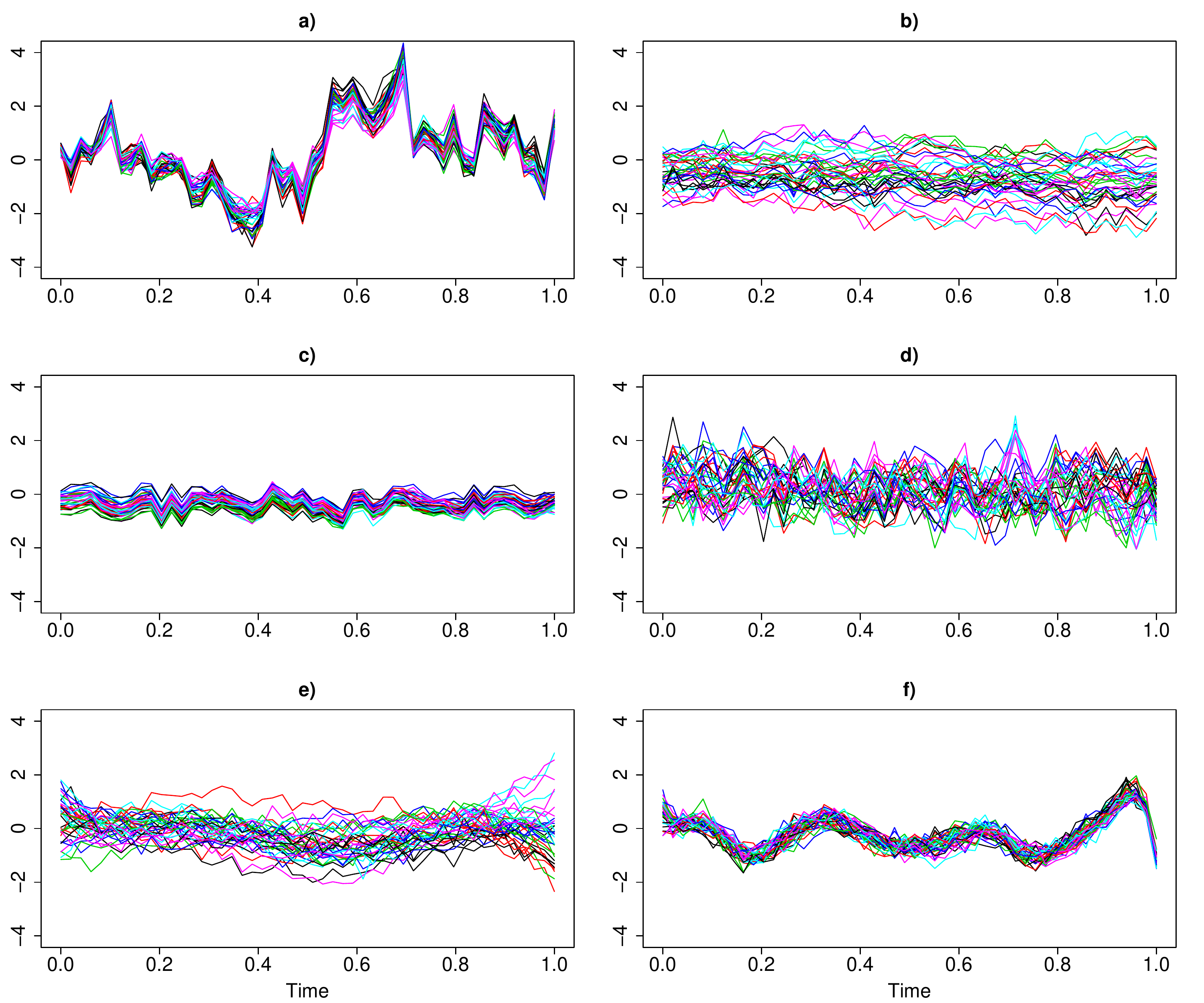} 
\caption{Examples of simulated data considering medium sample sizes without a deterministic time trend for: a) case 3 ($\alpha=0.1$, $\beta=10$), b) case 7 ($\alpha=2$, $\beta=0.1$), c) case 10 ($\alpha=0.1$, $\beta=0.1$), d) case 18 ($\alpha=2$, $\beta=10$), e) case 21 ($\alpha=2$, $p=7$) and f) case 22 ($\alpha=0.1$, $p=15$). The larger the value of $\alpha$ and $\beta$ the weaker the spatial and temporal correlation, respectively.}
\label{sim_data}
\end{center}
\end{figure} 

\subsection{Separable}
\label{sim_separable}

The first nine cases in Table \ref{tab_different_cases} were simulated (with and without the deterministic time trend) using the R-package RandomFields \citep{Schlather2015analysis}, and are Gaussian Sp.T processes with separable covariance functions \eqref{Covseparable}. The  spatial covariance function $C_s(h)$ in \eqref{Covseparable} was chosen to be the exponential covariance function with nugget effect, 
\begin{equation*}
C_s(h)= (1-\nu)\exp{(-\alpha h)}+\nu I\{h=0\}.
\end{equation*}
The nugget effect $\nu$ was set to $0.04$. For the parameter $\alpha$, we considered the values $0.1, 0.5$ and $2$, corresponding to the effective ranges $30$, $6$ and $1.5$ (very strong, medium and weak spatial correlation), respectively. The temporal covariance function $C_t(u)$ in \eqref{Covseparable} was given by the stable covariance function 
\begin{equation}
C_t(u)=\exp{(-(\beta u)^ \gamma)}.
\label{temp}
\end{equation}
Here, $\gamma$ was fixed to $0.5$, while the values for $\beta$ were $0.1, 1$, and $10$, corresponding to the effective ranges $90$, $9$ and $0.9$ (very strong, medium and weak temporal correlation), respectively.  

\subsubsection{Without a deterministic time trend}
\label{sim_separable_without_trend}

Estimation of the OKFD model was performed using the R-package geofd \citep{Giraldo2012geofd}. Given the generated data $Z(s_i,t_j), i=1, \ldots, n, j=1, \ldots, m$, the OKFD model was estimated using different types (Fourier and cubic B-splines) and number ($p$) of basis functions, see Table \ref{Nbasisstat} in Appendix \ref{AppendixTables} for a detailed specification. The ($p$) cubic B-splines were constructed based on $p-4$ equally distributed interior knots on the interval $[0,1]$. For each number and type of basis function three different semivariogram models (spherical, exponential and stable) were fitted to the empirical trace-semivariogram. For each case (1-9), a total of 36, 42 and 42 OKFD models ( 2 types of basis functions $\times$ \# different numbers of basis functions $\times$ \#trace-semivariograms) for small, medium and large sample sizes, respectively, were estimated and fitted to the data, evaluated by FCV (functional cross-validation) in terms of the MSPE \eqref{fcv}, and the minimum MSPE over the models registered. The {\em overall MSPE} for each case and sample size was computed as the average minimum MSPEs over the 100 replicates.

The Sp.T. kriging models were estimated using the R-packages gstat \citep{pebesma2004multivariable} and spacetime \citep{pebesma2012spacetime}. The Sp.T. semivariogram models (separable, product-sum and metric) described in Section \ref{sptk} were fitted to the empirical Sp.T. semivariogram, being all pairwise combinations of the exponential, spherical and stable variograms for the spatial (isotropic), temporal and joint variogram models. Hence, this resulted in 9 separable, 9 product-sum and 3 metric Sp.T. semivariogram models. All the Sp.T. kriging models were evaluated by FCV, minimum MSPE registered over the different models within each of the three groups of dependence structures, and the overall MSPE computed for each case (1-9) and sample size. The Sp.T. models with a product-sum and a metric covariance function were not evaluated for large size samples, due to the large computational time.


The overall MSPEs for the OKFD and the different Sp.T. kriging models for cases 1-9 considering medium sample sizes without a deterministic time trend are presented in Table \ref{tab2}. Corresponding results for small and large sample sizes are reported in Appendix \ref{AppendixTables}, Tables \ref{tab1} and \ref{tab3}. The numbers highlighted in red correspond to the smallest overall MSPE (per case) while the numbers in parentheses report the average computational time (for estimation and FCV) in seconds over all estimated models and replications (when run on a 3.5 GHz Intel Core i7 processor with 32 GB ram memory). The second last column presents the number of times, out of the 100 realisations, that OKFD had lower (minimum) MSPE than the Sp.T. separable model. The last column in Table \ref{tab2} reports p-values from paired two-sided t-tests comparing the overall MSPEs between the OKFD and the Sp.T. separable models, and thus reflects for which cases significant differences occur. 
\begin{table}[!htb]
\scriptsize
\caption{Prediction performance in terms of mean squared prediction errors (MSPEs) for the simulated cases 1-18 without a deterministic time trend, medium sample size. The smallest overall MSPE for each case is highlighted in red. The numbers in parentheses represent the average computational time in seconds over the corresponding estimated models and replications. The column \#Times represents the number of times, out of the 100 realisations, that OKFD had lower (minimum) MSPE than the Sp.T. separable model. The last column shows p-values from two-sided paired t-tests comparing the overall MSPEs between the OKFD and the Sp.T. separable models.}
\centering
\begin{tabular}{cccc|cccc|cc}
\midrule
\multicolumn{4}{c|}{Generated data} & \multicolumn{4}{c|}{overall MSPE} & \multicolumn{2}{c}{Comparison}\\
  \multicolumn{1}{c}{Case} & \multicolumn{1}{c}{Type}  & $\alpha$ & $\beta$ & OKFD & Sp.T. separable &Sp.T. product-sum&Sp.T. metric& \#Times & p-value \\
  \midrule \addlinespace
 1& \multicolumn{1}{c}{\multirow{10}*{\rotatebox[origin=c]{90}{Separable}}} & & 0.1 & {\bf\textcolor{red}{0.061}} (0.2)  & 0.062 (26.7) &0.064 (88.0)&0.083 (90.3)&27&0.552 \\  
                                                    2& & 0.1 & 1 & 0.068 (0.2) & {\bf\textcolor{red}{0.067}} (26.1)  &0.072 (89.0)&0.080 (89.9)&23&0.059 \\
                                                    3& & & 10 & 0.069 (0.2) & {\bf\textcolor{red}{0.066}} (24.7) &0.069 (92.2)&0.083 (90.3)&13&$<$0.001 \\ \addlinespace
                                                    
                                                    4& & & 0.1 & {\bf\textcolor{red}{0.134}} (0.2)  & 0.143 (23.6)  &0.149 (85.4)&0.204 (85.7)&56&$<$0.001 \\
                                                    5& & 0.5 & 1 & {\bf\textcolor{red}{0.131}} (0.2)  & 0.135 (29.8) &0.145 (102.9)&0.217 (98.5)&42&0.011 \\
                                                    6& & & 10 & 0.139 (0.2)& {\bf\textcolor{red}{0.137}} (27.6) &0.164 (103.3)&0.214 (95.7)&24&0.044 \\ \addlinespace
                                                      
                                                    7& & & 0.1 &{\bf\textcolor{red}{0.334}} (0.2) & 0.357 (29.0)  &0.353 (100.6)&0.416 (97.6)&64&$<$0.001 \\
                                                    8& & 2 & 1 & {\bf\textcolor{red}{0.368}} (0.2)  & 0.400 (29.3) &0.403 (106.1)&0.520 (99.2)&65&$<$0.001 \\
                                                    9& & & 10 & {\bf\textcolor{red}{0.372}} (0.2)   & 0.386 (28.7) &0.445 (104.3)&0.529 (97.9)&54&0.001 \\ \addlinespace

\hline \addlinespace
 10& \multicolumn{1}{c}{\multirow{10}*{\rotatebox[origin=c]{90}{Non-Separable}}} & & 0.1 & {\bf\textcolor{red}{0.066}} (0.2) & 0.067 (26.2) &0.070 (87.6)&0.101 (89.9)&38&0.050 \\
                                                      11& & 0.1 & 1 & 0.066 (0.2) & {\bf\textcolor{red}{0.065}} (25.7) &0.069 (87.5)&0.100 (89.6)&25&0.082  \\
                                                      12& & & 10 & 0.065 (0.2)  & {\bf\textcolor{red}{0.064}} (24.4) &0.067 (90.2)&0.082 (89.8)&27&0.075 \\ \addlinespace
                                                      
                                                      13& & & 0.1 & {\bf\textcolor{red}{0.128}} (0.2)  & 0.139 (25.8) &0.145 (92.8)&0.201 (92.9)&49&0.001 \\
                                                      14& & 0.5 & 1 & {\bf\textcolor{red}{0.134}} (0.2)  & 0.140 (24.2) &0.154 (90.2)&0.248 (87.0)&55&$<$0.001 \\
                                                      15& & & 10 &  {\bf\textcolor{red}{0.137}} (0.2)  & 0.140 (27.4) &0.155 (99.5)&0.257 (93.3)&41&0.006 \\ \addlinespace 
                                                      
                                                      16& & & 0.1 & {\bf\textcolor{red}{0.366}} (0.2)  & 0.398 (26.8)  &0.391 (95.9)&0.430 (92.4)&67&$<$0.001 \\
                                                      17& & 2 & 1 & {\bf\textcolor{red}{0.354}} (0.2)  & 0.390 (24.8) &0.386 (91.8)&0.476 (86.5)&63&$<$0.001 \\
                                                      18& & & 10 & {\bf\textcolor{red}{0.373}} (0.2)  & 0.391 (27.4) &0.402 (95.2)&0.579 (90.4)&52&0.003 \\ \addlinespace                                                 
\midrule
\end{tabular}
\label{tab2}
\normalsize
\end{table}

For cases 1-9, the Sp.T. separable kriging models in general performed better (lower overall MSPEs) than the Sp.T. product-sum and metric models, which is natural since the simulated data were generated from Sp.T. models with separable covariance functions. Still, the overall MSPE was often (significantly) lower for OKFD compared to the Sp.T. separable models, for small and medium sample sizes (Tables \ref{tab2} and \ref{tab1}). For large sample sizes, the estimated Sp.T. (separable) models often performed better than OKFD (Table \ref{tab3}). In general, the larger the sample size, the more likely it is that the estimated Sp.T. (separable) models perform better than OKFD. Studying the overall MSPEs in more detail reveals that the weaker the spatial correlation and the stronger the temporal correlation, the better the OKFD performs and the worse the Sp.T. separable model performs, regardless of the sample size. Case 3 for example, with strong spatial and weak temporal correlation, has significantly lower overall MSPE for the Sp.T. separable model compared to the OKFD model for medium and large sample sizes (Tables \ref{tab2} and \ref{tab3}). On the other hand, for case 7, with weak spatial and strong temporal correlation, the result is reversed. 
Moreover, from the registered computational times in Tables \ref{tab2}, \ref{tab1}, and \ref{tab3}, it can be concluded that prediction by and estimation of an OKFD model is substantially faster than the Sp.T. kriging models for cases 1-9, regardless of the sample size. The Sp.T. separable models had lower computational time compared to the Sp.T. product-sum and metric models, as expected (see, Section \ref{sptk}).

Figure \ref{res2} presents how the type and number of basis functions used in the OKFD model affects the prediction performance (minimum MSPE over the three trace-semivariogram models, averaged over the 100 realisations) for cases 3 and 7 considering medium sample sizes. The number of basis functions turns out to be an important factor for prediction performance, in general with smaller prediction error the more basis functions are used. On the other hand, the type of basis functions, Fourier or cubic B-splines, is of less importance. These findings are consistent with all cases (1-9) and for all considered sample sizes. 

\begin{figure}[!htb]
\begin{center}
\includegraphics[width=\textwidth]{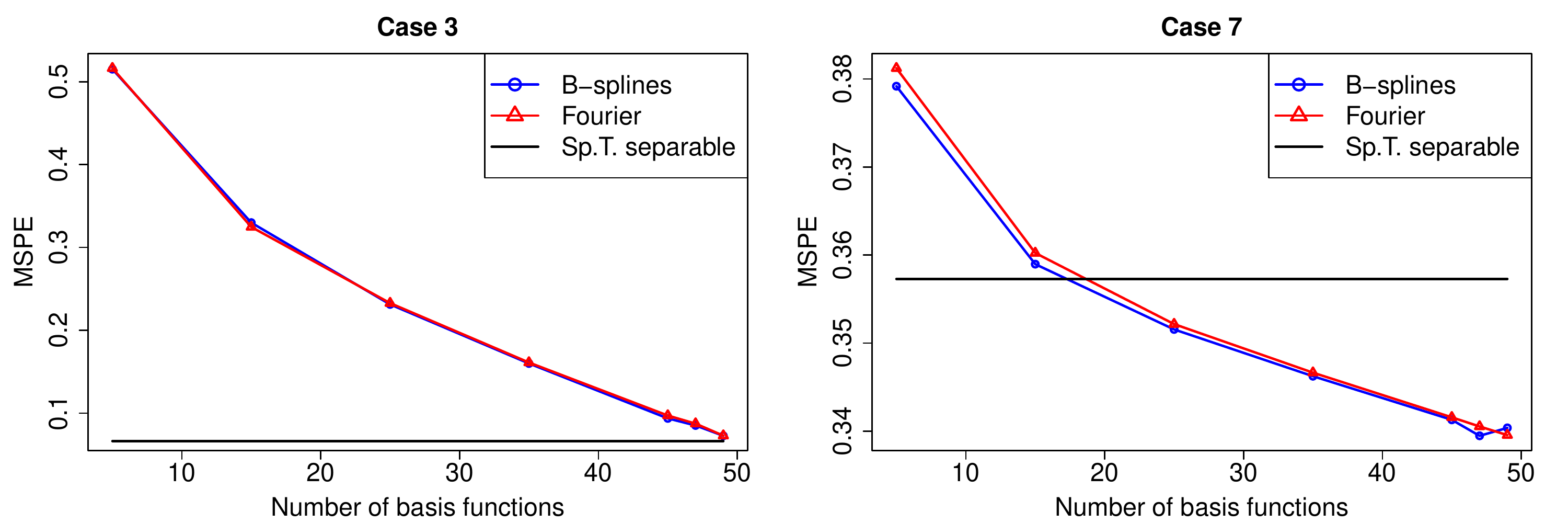}
\caption{Prediction performance (minimum MSPE over the three trace-semivariogram models, averaged over the 100 realisations) for cases 3 and 7 considering medium sample sizes without a deterministic time trend when the estimated OKFD model is based on different numbers ($p$) of basis functions, being both Fourier and cubic B-spline bases. The solid black lines represent the corresponding overall MSPE of the Sp.T. separable model.}
\label{res2}
\end{center}
\end{figure} 

To see how prediction performance may vary between replicates, Figure \ref{res1} presents box-plots of the differences in (minimum) MSPE between the two kriging approaches (MSPE(Sp.T)-MSPE(OKFD)) over the 100 replicates for cases 1-9 considering medium sample sizes. Here it becomes clear that OKFD produces more robust predictions. The Sp.T. kriging method (with estimated separable covariance function) produced much higher MSPEs than OKFD (casewise) for many realisations, especially for the cases with small and medium sample sizes.  

\begin{figure}[!htb]
\begin{center}
\includegraphics[width=\textwidth]{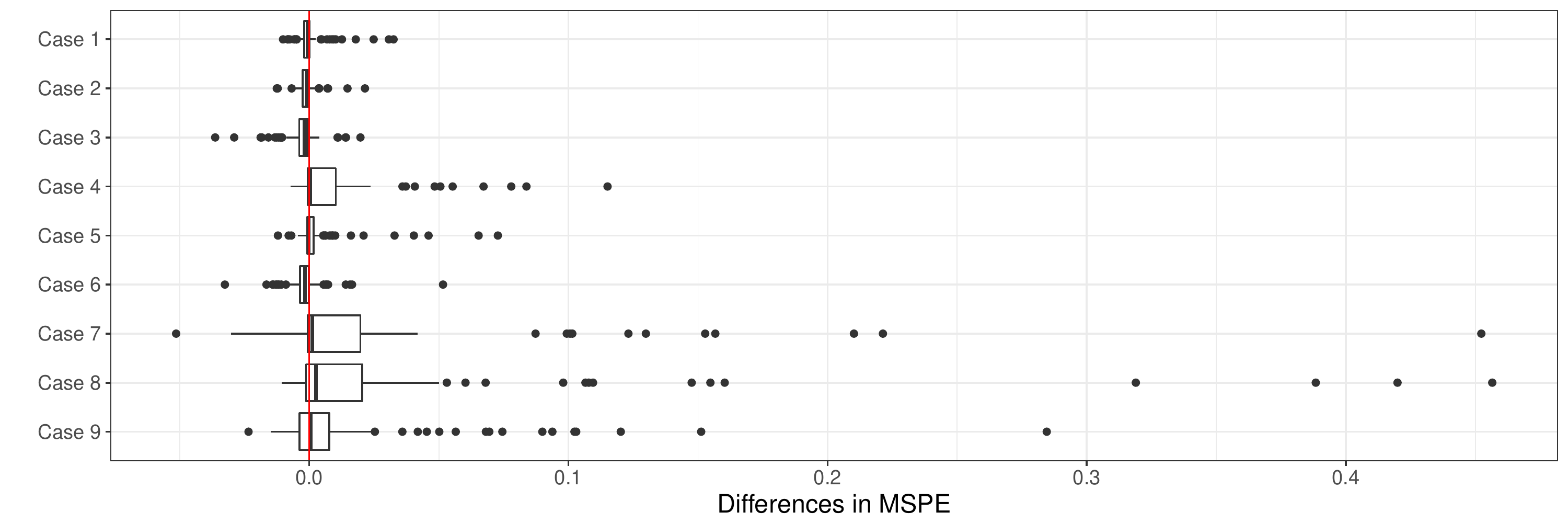} 
\caption{Box plots for cases 1-9 (considering medium sample sizes without a deterministic time trend) of the differences in (minimum) MSPE between the two kriging approaches (MSPE(Sp.T)-MSPE(OKFD)) for the 100 replicates.}
\label{res1}
\end{center}
\end{figure} 


\subsubsection{With a deterministic time trend}
\label{sim_separable_with_trend}
Simulated data sets for cases 1-9 with a common deterministic (sinusoidal) time trend were predicted by the same OKFD models as in Section \ref{sim_separable_without_trend}, since the OKFD models are designed to handle situations where a common deterministic time trend is present. Predictions were also made by universal Sp.T. kriging, using the same Sp.T. semivariogram models as in Section \ref{sim_separable_without_trend}. The deterministic time trend in the universal Sp.T. kriging model was specified to be the same as the one simulated from. 

Table \ref{tab5} summarizes the prediction performance of the two kriging approaches for cases 1-9  with deterministic time trend, for medium sample size. Corresponding results for small and large sample sizes are reported in Appendix \ref{AppendixTables}, Tables \ref{tab4} and \ref{tab6}. Comparing these tables with the corresponding tables in Section \ref{sim_separable_without_trend}, we see that the presence and estimation of a deterministic time trend did not have a large effect on the prediction performance, and more or less gave the same conclusions with respect to the relative performance of the two kriging approaches, regardless of the sample size. 

However, for small sample sizes, a difference showed up indicating that not only the number of basis functions, but also the type of basis functions used in the OKFD models matter. Here we noticed that, for any given number of basis functions used in the OKFD model, B-splines had better prediction performance than if Fourier basis functions was used. This is illustrated in Figure \ref{mspe_small_case3_7}, which presents how the type and number of basis functions used in the OKFD model affects the prediction performance (minimum MSPE over the three trace-semivariogram models, averaged over the 100 realisations) for cases 3 and 7 considering small sample sizes. We believe that a reason  for the observed effect is connected to that the functional representations based on B-splines better fits the data. The effect is not observed for larger sample sizes.

\begin{figure}[!htb]
\begin{center}
\includegraphics[width=\textwidth]{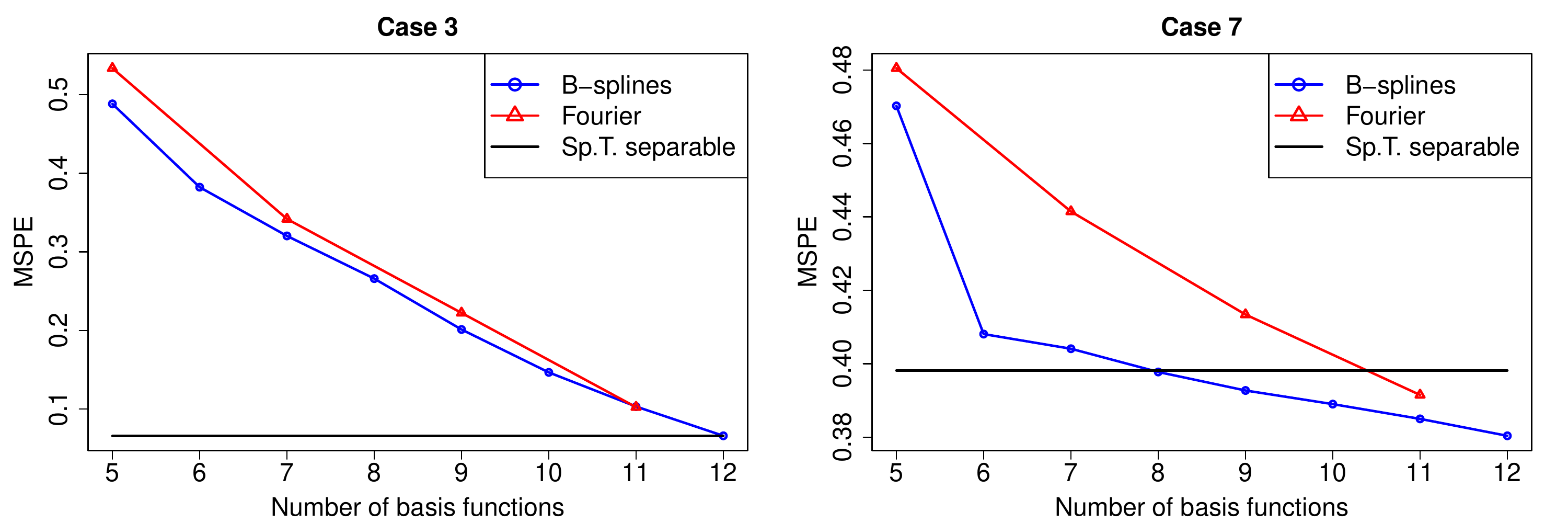}
\caption{Prediction performance (minimum MSPE over the three trace-semivariogram models, averaged over the 100 realisations) for cases 3 and 7 considering small sample sizes with a deterministic time trend when the estimated OKFD model is based on different numbers ($p$) of basis functions, being both Fourier and cubic B-spline bases. The solid black lines represent the corresponding overall MSPE of the Sp.T. separable universal kriging model.}
\label{mspe_small_case3_7}
\end{center}
\end{figure} 

\begin{table}[!htb]
\scriptsize
\caption{Prediction performance in terms of mean squared prediction errors (MSPEs) for the simulated cases 1-18 with a deterministic time trend, medium sample size. The smallest overall MSPE for each case is highlighted in red. The numbers in parentheses represent the average computational time in seconds over the corresponding estimated models and replications. The column \#Times represents the number of times, out of the 100 realisations, that OKFD had lower (minimum) MSPE than the Sp.T. separable model. The last column shows P-values from two-sided paired t-tests comparing the overall MSPEs between the OKFD and the Sp.T. separable models.}
\centering
\begin{tabular}{cccc|cccc|cc}
\midrule
\multicolumn{4}{c|}{Generated data} & \multicolumn{4}{c|}{overall MSPE} & \multicolumn{2}{c}{Comparison}\\
  \multicolumn{1}{c}{Case} & \multicolumn{1}{c}{Type}  & $\alpha$ & $\beta$ & OKFD & Sp.T. separable &Sp.T. product-sum&Sp.T. metric& \#Times & P-value \\
  \midrule \addlinespace
 1& \multicolumn{1}{c}{\multirow{10}*{\rotatebox[origin=c]{90}{Separable}}} & & 0.1 & {\bf\textcolor{red}{0.066}} (0.2)  & 0.066 (27.5)& 0.068 (89.4) & 0.094 (91.5)&29&0.706 \\
                                                    2& & 0.1 & 1 & 0.063 (0.2) & {\bf\textcolor{red}{0.063}} (26.9) & 0.066 (89.6) & 0.084 (91.2)&34&0.882 \\
                                                    3& & & 10 & 0.068 (0.2) & {\bf\textcolor{red}{0.066}} (25.8)  & 0.070 (93.8) & 0.090 (92.0)&19&$<$0.001  \\ \addlinespace
                                                                                                          
                                                    4& & & 0.1 & {\bf\textcolor{red}{0.135}} (0.2) & 0.141 (29.2) & 0.143 (95.7) & 0.220 (96.2)&49&0.002 \\
                                                    5& & 0.5 & 1 & {\bf\textcolor{red}{0.130}} (0.2) & 0.136 (30.4) & 0.153 (100.3) & 0.224 (96.0)&51&0.003 \\
                                                    6& & & 10 & 0.135 (0.2) & {\bf\textcolor{red}{0.133}} (28.5) & 0.157 (103.5) & 0.182 (96.0)&28&0.007 \\ \addlinespace
                                                      
                                                    7& & & 0.1 & {\bf\textcolor{red}{0.376}} (0.2) & 0.419 (29.1) & 0.411 (100.8) & 0.479 (96.2)&63&$<$0.001 \\
                                                    8& & 2 & 1 & {\bf\textcolor{red}{0.377}} (0.2)  & 0.394 (27.7) & 0.408 (95.8) & 0.524 (90.8)&65&0.001 \\
                                                    9& & & 10 & {\bf\textcolor{red}{0.372}} (0.2) & 0.385 (29.7) & 0.435 (103.4) & 0.531 (96.5)&44&0.002 \\ \addlinespace

\hline \addlinespace
 10& \multicolumn{1}{c}{\multirow{10}*{\rotatebox[origin=c]{90}{Non-Separable}}} & & 0.1 &  {\bf\textcolor{red}{0.063}} (0.2) & 0.064 (26.4) & 0.066 (88.5) & 0.099 (90.5)&27&0.463 \\
                                                      11& & 0.1 & 1 &  {\bf\textcolor{red}{0.065}} (0.2) & 0.065 (26.2) & 0.070 (88.1)  & 0.100 (90.8)&32&0.976 \\
                                                      12& & & 10 & 0.069 (0.2) & {\bf\textcolor{red}{0.067} }(25.8) & 0.072 (86.1) & 0.094 (86.6)&26&0.001 \\ \addlinespace
                                                      
                                                      13& & & 0.1 & {\bf\textcolor{red}{0.137}} (0.2) & 0.147 (27.8) & 0.152 (92.9) & 0.212 (93.5)&57&0.001 \\
                                                      14& & 0.5 & 1 & {\bf\textcolor{red}{0.138}} (0.2) & 0.144 (30.3) & 0.153 (97.4) & 0.270 (95.3)&41&0.034 \\
                                                      15& & & 10 & {\bf\textcolor{red}{0.136}} (0.2) & 0.140 (28.1) & 0.154 (98.1) & 0.230 (91.5)&41&0.002  \\ \addlinespace
                                                      
                                                      16& & & 0.1 & {\bf\textcolor{red}{0.359}} (0.2) & 0.406 (26.1) & 0.384 (96.0) & 0.450 (91.7)&84&$<$0.001 \\
                                                      17& & 2 & 1 & {\bf\textcolor{red}{0.367} }(0.2) & 0.414 (27.0) & 0.406 (96.2) & 0.516 (91.5)&74&$<$0.001 \\
                                                      18& & & 10 & {\bf\textcolor{red}{0.374}} (0.2) & 0.396 (27.5) & 0.398 (96.3) & 0.591 (91.9)&57&0.003  \\ \addlinespace                                             
\midrule
\end{tabular}
\label{tab5}
\normalsize
\end{table}

\subsection{Non-separable}
\label{sim_nonseparable}
Cases 10-18 in Table \ref{tab_different_cases} with and without the common deterministic time trend were simulated using the R-package RandomFields, and correspond to Gaussian Sp.T. processes with non-separable covariance functions of the form
\begin{equation*}
Cov_{\text {NSEP}}(h,u)= (1-\nu)(2- C_t(u) )^{-\delta /2} \exp{\left(-\frac{\alpha h}{\sqrt{2-C_t(u)}}\right)} + \nu I\{h=0\}.
\end{equation*}
with parameters set to $\delta=2$,  $\nu=0.04$, and $\alpha=0.1, 0.5$, and $2$. The covariance function $C_t(u)$ was chosen to be the stable covariance function  \eqref{temp} with $\gamma=0.5$ and $\beta=0.1, 1$, and $10$. The OKFD and the Sp.T. kriging models  estimated in Section \ref{sim_separable} were also fitted to the simulated data sets of cases 10-18 using the R-packages geofd, gstat, and spacetime, each with 100 realisations. 

Prediction performance of the two kriging approaches was evaluated in the same way as described in Section \ref{sim_separable_without_trend} and is summarized in Tables \ref{tab2} and \ref{tab5} for the non-separable cases 10-18 for medium sample sizes without and with a deterministic time trend, respectively. Corresponding results for small and large sample sizes are reported in Appendix \ref{AppendixTables}, Tables \ref{tab1}-\ref{tab6}. In general, we draw similar conclusions as in Section \ref{sim_separable} for the separable cases 1-9;  the Sp.T. separable kriging models perform better than the Sp.T. product-sum and metric models; the weaker the spatial correlation and the stronger the temporal correlation, the better the OKFD performs and the worse the Sp.T. (separable) model performs; OKFD works better for smaller sample sizes whereas fitted Sp.T. separable kriging models perform better for large sample sizes; more basis functions in OKFD generally improve prediction performance; computational times are much lower for OKFD; the presence of a deterministic time trend does not change the conclusions. 

A more detailed comparison of the overall MSPEs (and p-values) in Tables \ref{tab2}-\ref{tab5}, and Tables \ref{tab1}-\ref{tab6} reveals that prediction performance of OKFD in general improves in comparison to the Sp.T. separable kriging models for the simulated data sets with non-separable covariance functions (cases 10-18) compared to those simulated from separable covariance functions (cases 1-9). This result was to be expected, since none of the fitted (Sp.T.) kriging models coincide with the models that generated the data for cases 10-18.  

\subsection{Non-stationary}
\label{sim_nonstationary}
Generation of simulated data sets of second-order isotropic stationary functional, but non-stationary Sp.T. Gaussian processes with constant mean (cases 19-24 in Table \ref{tab_different_cases}) were based on the model
\begin{equation}
\label{nonstatsim}
\chi_{s_i}(t)={\bold{a}_{i}}^{\intercal} \bold{B}(t) +\epsilon_{s_i}(t),  \hspace{5pt} i=1,...,n.
\end{equation}
The basis functions $\bold{B}(t) \in \bold{R}^p$ with $p=7$, and $15$ cubic B-splines, are defined on equally space knots on the interval $[0,1]$. Moreover, $\bold{a}_{i}=(a_1(s_i), \ldots, a_p(s_i))^{\intercal}$, where $a_k(s), k=1, \ldots, p$, were chosen to be $p$ independent identically distributed second-order stationary isotropic zero mean Gaussian processes in $\mathbf{R}^2$ with exponential covariance function $C(h)=\exp(-\alpha h)$ with $\alpha=0.1, 0.5$ and $2$. Hence, the vectors  $(a_k(s_1), \ldots, a_k(s_n))^{\intercal}$, $k=1, \ldots, p$, are $p$ independent realisations of a multivariate Gaussian random variable $N_{n}(\bold{0},\bold{\Sigma})$, where the $n\times n$ covariance matrix equals $\bold{\Sigma}= \{\exp(-\alpha \|s_i-s_j\|)\}$. The $\epsilon_{s_i}(t)$'s are white noise measurement errors, independent and identically normally distributed  with mean 0.04 and variance 1, i.e. $\epsilon_{s_i}(t)\sim N(0.04,1)$. For each of  the $ 2 \times 3=6$ cases (19-24), 100 independent realisations were generated using the R-package fda \citep{ramsay2009functional}. 

To each generated data set we fitted the same OKFD models as those fitted in Section \ref{sim_separable_without_trend} using the R-package geofd. However, for medium and large sample sizes we extended the choices of number of basis functions (see Appendix \ref{AppendixTables} Table \ref{Nbasisnonstat} for specification), yielding a total of 36, 90 and 90 different estimated OKFD models for small, medium and large sample sizes, respectively. For each case (19-24), sample size and realisation, predictions were made and evaluated by FCV for all models, and the minimum MSPE over the models registered. The {overall MSPE} for each case and sample size was finally computed as the average minimum MSPE over the 100 replicates. Furthermore, the same Sp.T. ordinary kriging models fitted to the data in Section \ref{sim_separable_without_trend}, were also estimated for these data sets. Additionally, Sp.T. universal kriging models were fitted, with a deterministic time trend specified by a linear combination of the same basis functions that were used to generate the data set. Hence, a total of 18 separable, 18 product-sum and 6 metric Sp.T. kriging models were fitted to the data; predictions evaluated by FCV, the minimum MSPE registered over the models within the three groups of dependence structures, and the overall MSPE computed for each case (19-24) and sample size. As in Section \ref{sim_separable} and  \ref{sim_nonseparable}, the Sp.T. models with a product-sum and a metric covariance function were not evaluated for large sample sizes due to the large computational times.

Table \ref{table1} summarizes the prediction performance of the two kriging approaches for cases 19-24 and all three sample sizes. Note that these simulated data sets have time varying variances and covariances, which the Sp.T. kriging approach is not designed to capture, whereas the OKFD model can handle such situations. We would therefore expect OKFD to perform better than the Sp.T. kriging approach, which is indeed the case. In fact, OKFD has significantly lower overall MSPE for all cases and sample sizes in Table  \ref{table1} except for cases 22-23 considering small sample sizes. For these two cases the Sp.T. separable kriging model works better. This is probably coupled to the low number of observations (12) per location for small sample sizes. When the functional representations of the data at each location is formed for the OKFD models, we can thus at most fit a linear combination of 12 basis functions, whereas, the data are generated by 15 B-splines. The functional representations may thus fail to capture the full temporal time dynamics. The Sp.T. universal kriging models on the other hand, can fit a common deterministic time trend using all 15 B-splines. 
From Table \ref{table1}, it is also noted that Sp.T. kriging models with fitted metric variograms sometimes had better prediction performance than the Sp.T. separable kriging models, but still worse than the best OKFD models. Moreover, we again note that the computational time for OKFD is much lower than for the Sp.T. models.

\begin{table}[!htbp]
\tiny

\caption{Prediction performance in terms of mean squared prediction errors (MSPEs) for the simulated cases 19-24 over the different sample sizes. The smallest overall MSPE for each case is highlighted in red. The numbers in parentheses represent the average computational time in seconds over the corresponding estimated models and replications. The column \#Times represents the number of times, out of the 100 realisations, that OKFD had lower (minimum) MSPE than the best Sp.T. model. The last column shows p-values from two-sided paired t-tests comparing the overall MSPEs between the OKFD and the best Sp.T. model.}

\centering
\begin{tabular}{>{\centering\arraybackslash}p{0.8cm} >{\centering\arraybackslash}p{0.4cm}  c >{\centering\arraybackslash}p{1.2cm} >{\centering\arraybackslash}p{0.3cm}|cccc|cc}
   \hline\hline
\multicolumn{5}{c|}{Generated data} & \multicolumn{4}{c|}{overall MSPE} & \multicolumn{2}{c}{Comparison}\\
  \multicolumn{1}{c}{Scenario} & \multicolumn{1}{c}{Type} & \multicolumn{1}{c}{Data size} & \#bases (p) & $\alpha$ & OKFD & Sp.T. Separable & Sp.T. Product-sum & Sp.T. Metric & \#Times & p-value \\
  \hline \addlinespace
 19& \multicolumn{1}{c}{\multirowcell{23}{\rotatebox[origin=c]{90}{Non-stationary}}} & \multirow{8}{*}{Small} & & 0.1 &  {\bf\textcolor{red}{0.054}} (0.2) & 0.056 (8.0) & 0.057 (11.5) & 0.056 (6.3) & 87 & $<$0.001  \\  
                                                    20& & & 7 & 0.5 & {\bf\textcolor{red}{0.096}} (0.2) & 0.099 (8.4)  & 0.100 (14.1) & 0.100 (6.1) &100& $<$0.001\\
                                                    21& & & & 2 &  {\bf\textcolor{red}{0.226}} (0.2)  & 0.232 (9.0) & 0.236 (15.7) & 0.235 (6.4) & 92 & $<$0.001 \\ \addlinespace
                                                    
                                                    22& & & & 0.1 &  0.058 (0.2) & {\bf\textcolor{red}{0.057}} (7.4) & 0.058 (13.2) & 0.061 (6.1) & 11 & $<$0.001  \\
                                                    23& & & 15 & 0.5 & 0.099 (0.2) &  {\bf\textcolor{red}{0.099}} (7.8) & 0.101 (15.2) & 0.102 (6.0) & 41 &0.342\\
                                                    24& & & & 2 & {\bf\textcolor{red}{0.239}} (0.2) & 0.243 (8.7) & 0.260 (16.4) & 0.263 (6.5) & 45 &0.049\\ \addlinespace
                                                    
						     19& & \multirow{8}{*}{Medium} & & 0.1 &  {\bf\textcolor{red}{0.050}} (0.2)   & 0.055 (24.8) & 0.056 (83.8) & 0.052 (84.1) & 97 & $<$0.001 \\  
                                                    20& & & 7 & 0.5 & {\bf\textcolor{red}{0.083}} (0.2)  & 0.092 (24.4) & 0.093 (79.5) & 0.088 (79.6) &95& $<$0.001 \\
                                                    21& & & & 2 & {\bf\textcolor{red}{0.202}} (0.2)  & 0.212 (28.7) & 0.220 (91.2) & 0.210 (88.0) & 85&$<$0.001 \\ \addlinespace
                                                    
                                                    22& & & & 0.1 &  {\bf\textcolor{red}{0.052}} (0.2)  & 0.056 (26.1) & 0.056 (81.9) & 0.056 (84.2) &100& $<$0.001\\
                                                    23& & & 15 & 0.5 &  {\bf\textcolor{red}{0.087}} (0.2)  & 0.094 (28.0) & 0.093 (90.2) & 0.093 (87.3) & 100&$<$0.001\\
                                                    24& & & & 2 & {\bf\textcolor{red}{0.209}} (0.2) & 0.218 (28.2) & 0.229 (92.6) & 0.223 (87.7) & 100 &  $<$0.001\\ \addlinespace
                                                      
                                                    19& &  \multirow{8}{*}{Large} & & 0.1 & {\bf\textcolor{red}{0.044}} (9.0) & 0.047 (150.7) & & & 100 &$<$0.001  \\
                                                    20& & & 7 & 0.5 &  {\bf\textcolor{red}{0.055}} (9.1)  & 0.061 (151.6) &  &  &100 &$<$0.001\\
                                                    21& & & & 2 &   {\bf\textcolor{red}{0.097}} (9.2)  & 0.105 (152.6)&  & & 100 & $<$0.001 \\ \addlinespace

                                                    22& & & & 0.1 & {\bf\textcolor{red}{0.045}} (9.0)  & 0.047 (138.8) &  & & 100 & $<$0.001  \\
                                                    23& & & 15 & 0.5 &  {\bf\textcolor{red}{0.057}} (9.1)  & 0.061 (140.8) &  & & 100 &  $<$0.001  \\
                                                    24& & & & 2 &  {\bf\textcolor{red}{0.100}} (9.2) & 0.106 (151.5) &  & & 100 &$<$0.001  \\ \addlinespace
\hline
\end{tabular}
\label{tab4}
\normalsize
\label{table1}
\end{table}

Figure \ref{res3} illustrates how the type and number of basis functions used in the fitted OKFD models affect the prediction performance (minimum MSPE over the three trace-semivariogram models, averaged over the 100 realisations) for cases 21 and 22 considering medium sample sizes. Case 21 corresponds to simulated data generated by 7 B-splines with weak spatial dependence, whereas case 22 corresponds to simulated data generated by 15 B-splines with strong spatial dependence. In contrast to the simulated stationary Sp.T. models (cases 1-18) where prediction performance typically increases with the number of basis functions used in the fitted OKFD models, here we observe this phenomena only when Fourier basis functions are used in the fitted OKFD models. For B-splines, the best prediction performance is (naturally) achieved using the same number of B-splines in the OKFD fitted models as used to generate the simulated data set (7 for case 21 and 15 for case 22). In fact, using too many B-splines may give substantially poorer predictions, especially when the spatial dependence is weak, as for case 21, cf. Figure \ref{res3}. It can also be noted that the best OKFD model using B-splines has significantly smaller MSPE than the best OKFD model using Fourier basis functions. If the simulated data sets would have been generated by a set of Fourier basis functions instead, we would most likely see the opposite behaviour, i.e. that the same Fourier basis functions in the fitted OKFD model as in the data generation model  probably would give the best prediction performance, and do better than the OKFD models using B-splines. 

For the Sp.T. separable kriging models, it turned out (regardless of sample size) that it was advantageous to use universal kriging (estimating a deterministic time trend), especially for the cases with weak spatial dependence, whereas the prediction performance was about the same for cases with strong spatial dependence (Figure \ref{res3}). For the Sp.T. metric model, we observed the opposite behaviour, i.e., for cases with weak spatial dependence it was more advantageous to use ordinary kriging instead of universal kriging. 
\begin{figure}[!htb]
\begin{center}
\includegraphics[width=\textwidth]{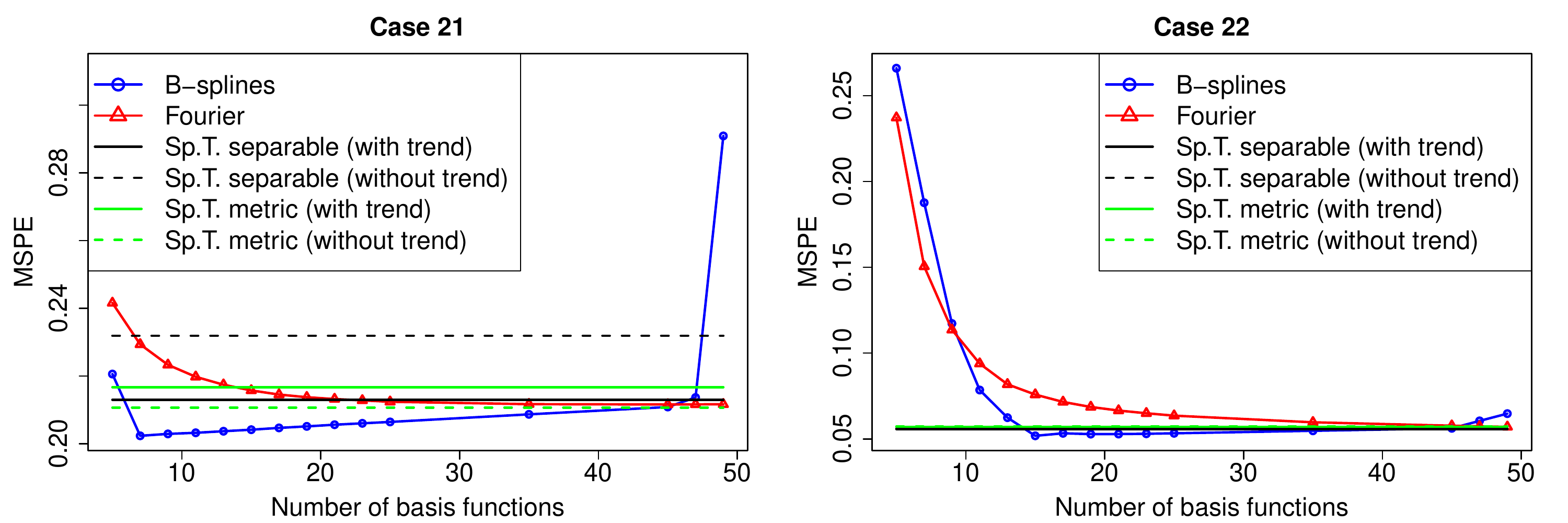} 
\caption{Prediction performance (minimum MSPE over the three trace-semivariogram models, averaged over the 100 realisations) for cases 21 and 22 considering medium sample sizes when the estimated OKFD model is based on different numbers ($p$) of basis functions, being both Fourier and cubic B-spline bases. The solid and dashed black lines represent the corresponding overall MSPE of the Sp.T. separable model with and without an estimated deterministic time trend, respectively. The solid and dashed green lines represent the corresponding overall MSPE of the Sp.T. metric model with and without an estimated deterministic time trend, respectively.}
\label{res3}
\end{center}
\end{figure}

\section{Applications}
\label{applications_main}
In this section we compare the prediction performance of the OKFD and the Sp.T. kriging models for two different data sets. The first data set consists of temperature curves recorded in the Maritimes Provinces of Canada, and the second corresponds to salinity curves obtained from the Caribbean coast of Colombia. 
\subsection{Spatial prediction of temperature curves in the Maritime Provinces of Canada}
\label{applications}
Here we analyse a meteorological data set, available in the R package geofd \citep{Giraldo2012geofd}. The data consists of temperature measurements recorded at $n=35$ weather stations at Canada's Atlantic coast in the Maritime Provinces (Figure \ref{figur1}, top panel). At each station, the daily mean temperature averaged over the period 1960-1994 (February 29th combined with February 28th) has been recorded. The resulting functional data are displayed in Figure \ref{figur1} (bottom panel), connected by light grey lines. 
\begin{figure}[!htb]
\begin{center}
\includegraphics[width=\textwidth]{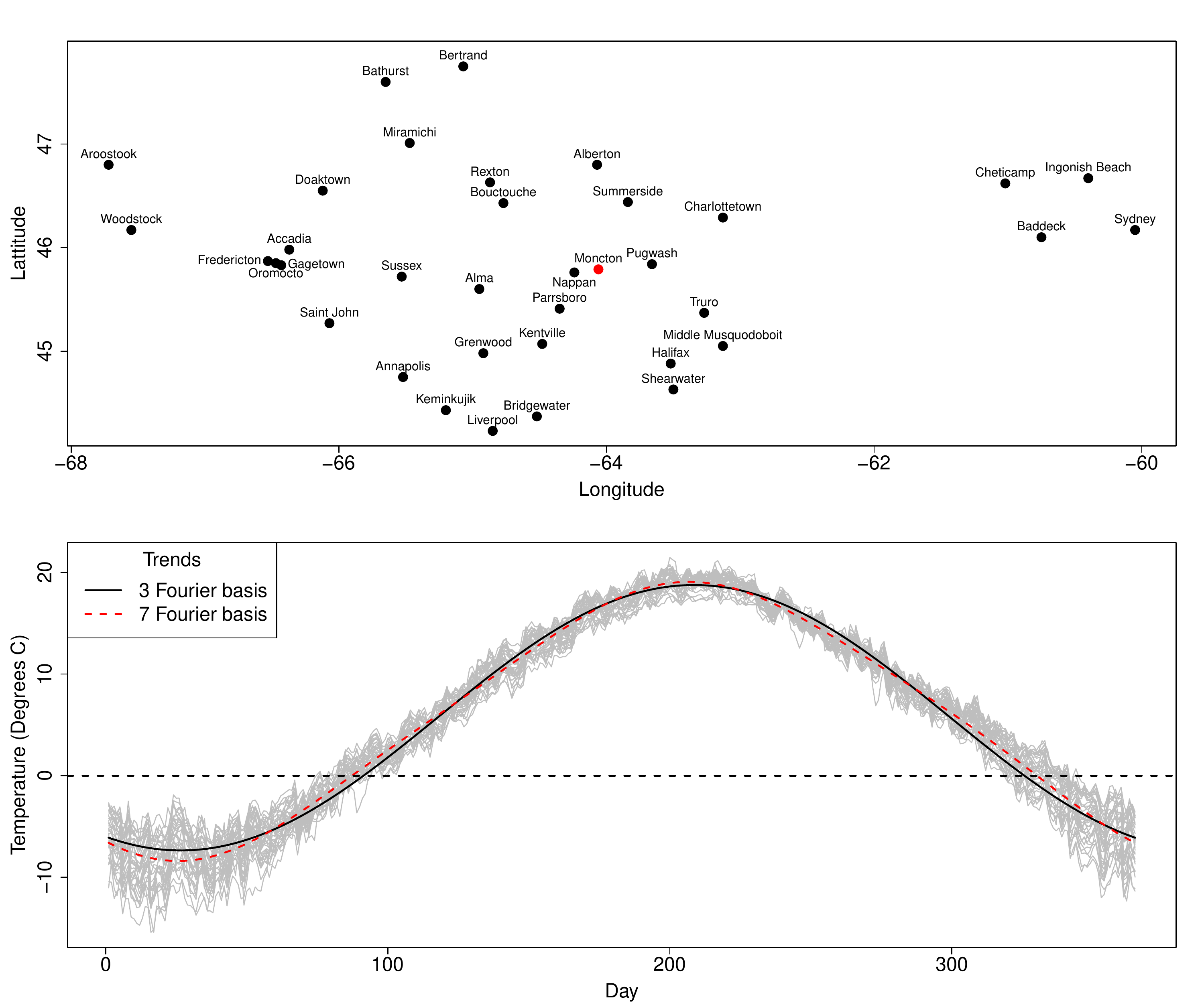} 
\caption{Locations of the  36 weather stations in the Canadian Maritime provinces (top panel) where the average (over 30 years) daily temperature curves (bottom panel) were registered. The bottom panel also presents the estimated common time trend specified as linear combinations of the first 3  and 7 Fourier basis functions, respectively.}
\label{figur1}
\end{center}
\end{figure} 
Using the R-package geofd, the data was first predicted by the OKFD model, 
which was estimated using 51, 101, 151, 201, 251, 301 and 351 Fourier basis functions. 
Three semivariogram models (exponential, spherical and stable) were fitted to the empirical trace-semivariogram by the OLS method. Thus, in total we estimated $7 \times 3=21$  OKFD models. Predictions were then made and evaluated by FCV in terms of their MSPEs \eqref{fcv}. 

The best prediction performance was achieved using the stable trace-semivariogram (Figure \ref{figur1variogramsF}, left panel) for all considered numbers of Fourier bases. Figure \ref{figur1variogramsF} (right panel) clearly reveals that the prediction error (minimum MSPE over the three trace-semivariogram models) decreases with the number of Fourier basis functions used in the fitted OKFD models. Thus, the best performance was attained with 351 Fourier basis functions and its MSPE was 0.5738. The average computational time for an estimated OKFD model based on 51 and 351 Fourier basis functions was less than one and three seconds, respectively.
\begin{figure}[!htb]
\begin{center}
\includegraphics[width=\textwidth]{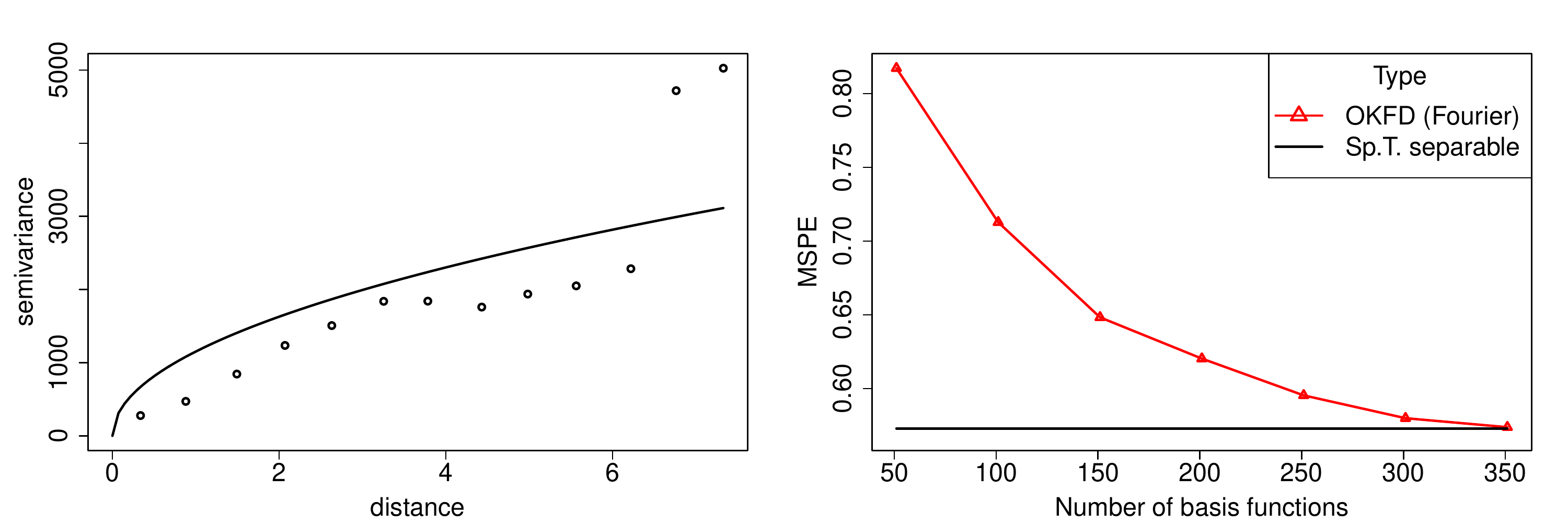}
\caption{Left panel: Empirical trace-semivariogram and the best fitted stable model for the Canadian temperature curves, represented by 351 Fourier basis functions. Right panel: Minimum MSPE over the three trace-semivariogram models for OKFD, based on different numbers of Fourier basis functions. The solid black line represents the MSPE of the best Sp.T. model.}
\label{figur1variogramsF}
\end{center}
\end{figure} 

The data was further predicted using Sp.T. kriging. Since the data show a clear time trend, universal Sp.T. kriging was first applied. The deterministic time trend was modelled by a linear combination of the 3 (and 7) first Fourier basis functions,
and estimated by the OLS method. The dependence structure of the resulting residuals was then estimated by fitting Sp.T. second-order stationary and isotropic semivariogram models to the empirical Sp.T. semivariogram of the residuals. The Sp.T. semivariogram models (separable, product-sum and metric) described in Section \ref{sptk} were estimated, letting their corresponding spatial, temporal and joint semivariogram models be altered between the exponential, spherical and stable semivariogram models. This resulted in 9 separable, 9 product-sum and 3 metric Sp.T. semivariogram models. As a comparison we also predicted the original data by Sp.T. ordinary kriging, using the same Sp.T. semivariogram models as for the universal Sp.T. kriging models. Thus, in total we investigated $(9+9+3) \times 3= 63$  Sp.T. kriging models. All  models were fitted to the data and predictions evaluated by FCV.  

Table \ref{table2Mari} presents the best (smallest MSPE) Sp.T. models, within each of the three groups of dependence structure (separable, product-sum and metric), with and without an estimated trend. The numbers in brackets report the corresponding average computational time in seconds over the estimated models. Many of the Sp.T. models have about the same prediction performance, with the exceptions of the Sp.T. metric models with estimated trend, which worked less well. The best Sp.T. models have approximately the same magnitude of MSPE as the best OKFD model (MSPE being 0.5738), but in terms of computational time, an OKFD model (taking 1-3 seconds to compute) was 100-10000 times faster to compute compared to a Sp.T. kriging model. 
\begin{table}[!htb]
\small
\caption{Prediction performance of different Sp.T. kriging models for the Canadian weather data. For each type of trend and Sp.T. variogram model, the (minimum) MSPE is reported. The numbers in parentheses represent the average computational time in seconds over the corresponding estimated models.} 
\centering
\begin{tabular}{c|ccc}
  \midrule
& \multicolumn{3}{c}{ MSPE}\\
 Trend & Sp.T. Separable& Sp.T. Product-sum & Sp.T. Metric  \\
 \midrule \addlinespace
 						No trend& 0.5730 ($1.8\cdot 10^2$) & 0.5861 ($1.3\cdot 10^4$) & 0.5730 ($1.3\cdot 10^4$)\\  \addlinespace
                                                  3 Fourier basis& 0.5730 ($1.8\cdot 10^2$) & 0.5731 ($1.3\cdot 10^4$) & 1.1126 ($1.4\cdot 10^4$)\\
                                                   7 Fourier basis& 0.5734 ($1.6\cdot 10^2$) & 0.5731 ($1.3\cdot 10^4$) & 1.0670 ($1.4\cdot 10^4$)\\ \addlinespace                                                     
\midrule
\end{tabular}
\normalsize
\label{table2Mari}
\end{table}

Figure \ref{figure_predictions} presents the observed daily temperatures at locations Bertrand (the location with the largest prediction error) and Moncton, together with the corresponding predicted values using the best OKFD and Sp.T. kriging models. It emphasizes that there are very small differences between the best OKFD and Sp.T. models (in terms of prediction performance).

\begin{figure}[!htb]
\begin{center}
\includegraphics[width=\textwidth]{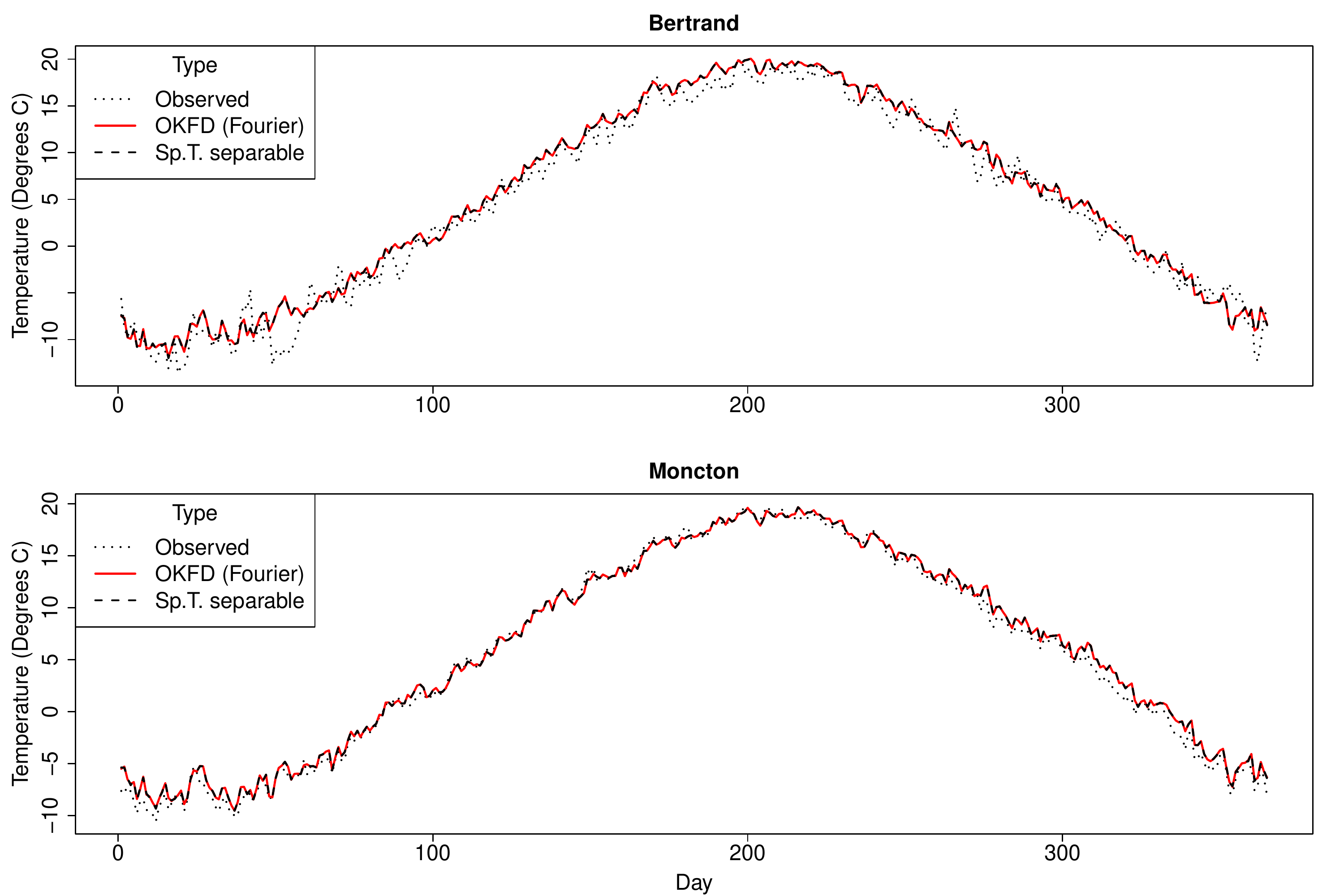}
\caption{Predicted temperatures at locations Bertrand (top) and Moncton (bottom) obtained by the best OKFD model (solid grey line) and the best Sp.T. model (dashed black line) together with the observed (dotted) values.}
\label{figure_predictions}
\end{center}
\end{figure}

This data set has previously been analysed by e.g. \cite{giraldo2009geostatistical} with the objective to demonstrate and compare the functional kriging methods OKFD, PWFK and FKTM. \cite{giraldo2009geostatistical} concluded that the three methods have similar FCV prediction performance when the  first 65 Fourier basis functions are used in \eqref{eqn:Xbasisexp} to represent the $\chi_{s_i}(t)$'s. In \cite{menafoglio2013universal} this data set was used to investigate the effect of using universal kriging for functional data (UKFD) instead of OKFD, also by representing the functional data with the first 65 Fourier basis functions. They concluded that UKFD performed better in terms of FCV compared to OKFD. The FCV performance was there computed with respect to the fitted data, thus differing from ours, where raw data has been used. 


\subsection{Spatial prediction of salinity curves on the Caribbean coast of Colombia}

Here we analyse a data set consisting of salinity measurements recorded at 21 monitoring stations of the lagoonal-estuarine system comprised by Ci\'enaga Grande de Santa Marta (CGSM) and Complex of Pajarales (CP) located on the Caribbean coast of Colombia, see Figure \ref{salinity1} (top panel). The data for each station were recorded biweekly from October 1988 to March 1991 (connected by lines in Figure \ref{salinity1}, bottom left panel). 
This data set has previously been used by \cite{reyes2015residual} to illustrate, evaluate and compare the performance of the functional kriging approaches OKFD, PWFK and FKTM when applied on residual curves after estimating a deterministic trend, so called ROKFD, RPWFK and RFKTM, as well as directly applied on the data. They came to the conclusion that ROKFD was the best alternative for performing functional kriging prediction, although the difference (in prediction performance) to RPWFK and RFKTM were small. Here, we will redo the same analysis using OKFD and ROKFD, and add predictions made by  Sp.T. kriging models, for comparison purposes. 



\begin{figure}[!htb]
\begin{center}
\includegraphics[width=\textwidth]{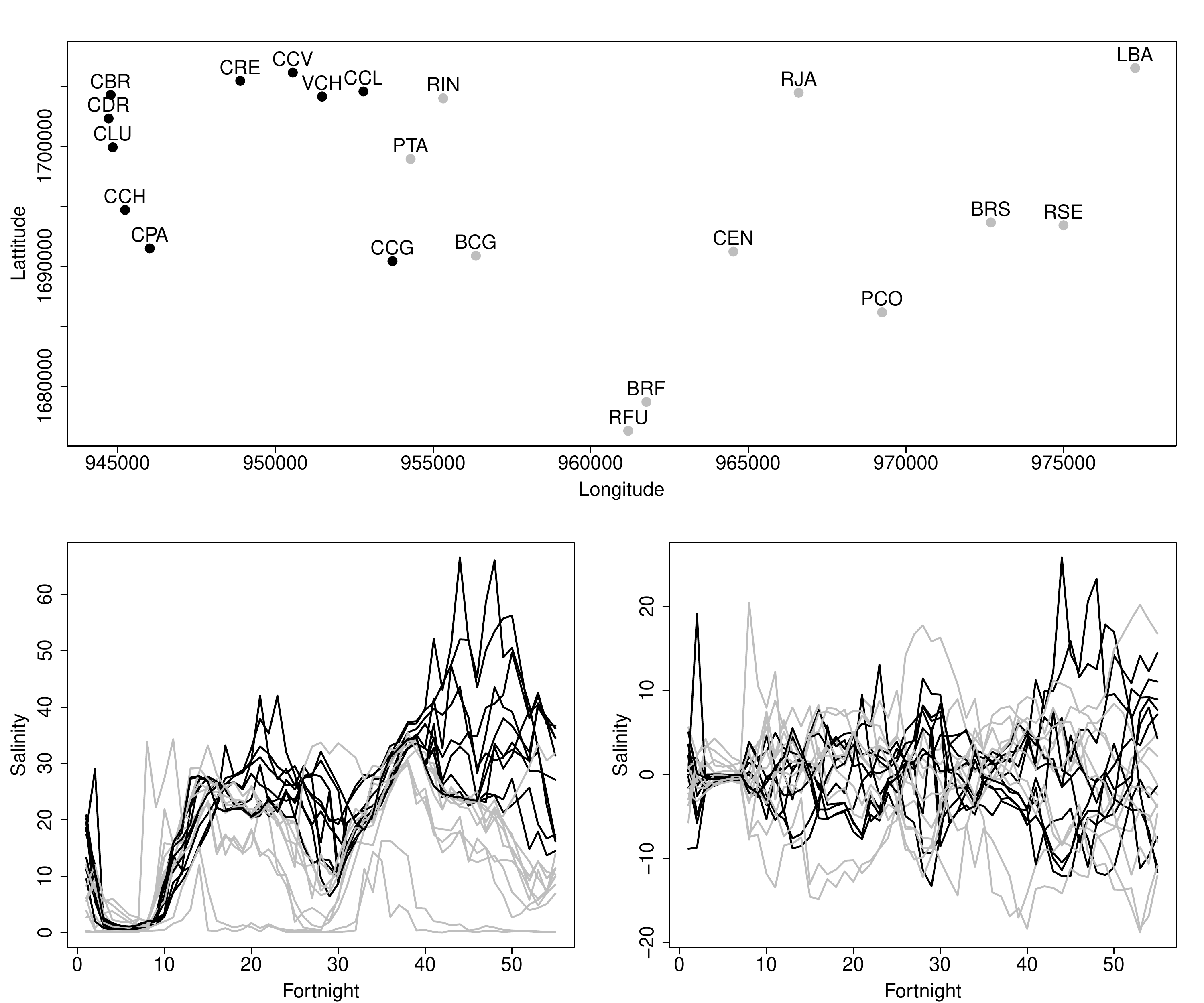} 
\caption{Top panel: The 21 monitoring stations of the lagoonal-estuarine system comprised by CGSM (stations: BCG, BRF, BRS, CEN, LBA, PCO, PTA, RFU, RIN, RJA, RSE) and CP (stations: CBR, CCG, CCH, CCL, CCV, CDR, CLU, CPA, CRE, VCH). Bottom left panel: Biweekly salinity data recorded on the 21 monitoring stations. Bottom right panel: Residuals obtained from the fitted functional regression model. The salinity data, residuals and the corresponding locations at CGSM and CP are represented by grey and black lines/points, respectively.}
\label{salinity1}
\end{center}
\end{figure} 


The salinity data is evidently not stationary, as there is a clear increasing trend from east to west, see Figure \ref{salinity1} (top and bottom left panel). For ROKFD, a deterministic trend was thus first estimated. We used the same trend model as \cite{reyes2015residual},
\begin{equation}
X_i(t)=\alpha(t)+\beta_1(t)\text{Longitude}_i+\beta_2(t)\text{Latitude}_i+\epsilon_i(t),
\label{salinitytrend}
\end{equation}   
where $X_i(t)$, $i=1,...,21$ are the salinity curves, and $\alpha(t)$, $\beta_1(t)$ and $\beta_2(t)$ are the functional parameters. 
For convenience, since we are evaluating prediction performance by FCV, we restricted the estimation of the functional parameters to the observed time points and thus fitted the model by OLS for each observed time point using the raw salinity data. \cite{reyes2015residual} fitted the trend based on smoothed salinity curves.
Once the trend was estimated, the resulting residual data $\epsilon_i(t_j)$, $i=1,...,21$, $j=1,...,55$ (connected by lines in Figure \ref{salinity1}, right panel) were formed. The Salinity data was predicted by ROKFD using the estimated deterministic trend combined with estimated OKFD models applied to the residual data. As a comparison we also estimated OKFD models directly on the original raw data, and used them to predict the Salinity data. 

In accordance with \cite{reyes2015residual}, we used B-splines basis functions to construct functional representations of both the original and the residual data. Specifically, we studied OKFD and ROKFD and their MSPEs using 5, 6, 7, 8, 9, 10, 15, 20, 30, 40 and 50 B-splines. Moreover, the exponential, spherical and stable semivariogram models were fitted to the empirical trace semivariograms (of the residual and original data) by OLS. Predictions were made and evaluated by FCV in terms of their MSPE for a total of $11 \times 3=33$ estimated OKFD models on the original data as well as 33 estimated ROKFD models by using the R-package geofd. 

The minimum MSPE, for each number of basis functions and trend used, was obtained by the stable trace-semivariogram (Figure \ref{salinity_fvariogram_mspe}, left panel). Figure \ref{salinity_fvariogram_mspe} (right panel) presents how the trend and the number of B-splines used in the fitted OKFD and ROKFD models affect the prediction performance (minimum MSPE over the three trace-variogram models). The prediction errors of OKFD decreases with the number of basis functions used while the performance of ROKFD approximately is the same, irrespective of the number of basis functions used (Figure \ref{salinity_fvariogram_mspe}). We believe that the performance of the latter is due to that a large part of the dependence structure is captured by the estimated Sp.T. deterministic trend. It is also noted that prediction based on ROKFD yields lower prediction errors compared to OKFD. 
Moreover, the computational time for all the 66 OKFD and ROKFD models was about the same, taking approximately 0.2 seconds each.


\begin{figure}[!htb]
\begin{center}
\includegraphics[width=\textwidth]{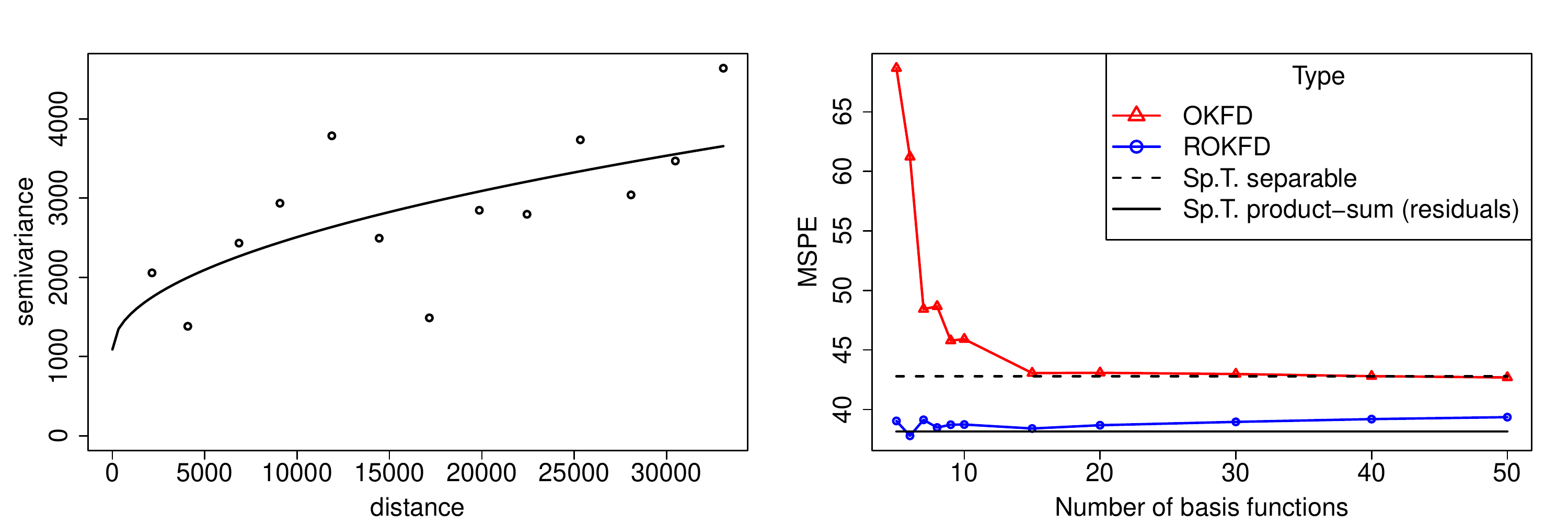} 
\caption{Left panel: Empirical trace semivariogram and the best fitted stable model for residual curves represented by 6 B-spline basis functions. Right panel: MSPEs for OKFD and ROKFD over different number of B-splines and trends used. The solid and dotted black line represents the MSPE of the best Sp.T. model with a Sp.T. product-sum and separable semivariogram model applied on the residuals and the original data, respectively.}
\label{salinity_fvariogram_mspe}
\end{center}
\end{figure} 

Sp.T. universal kriging models were also estimated and used to predict the Salinity data, based on the estimated deterministic trend specified in \eqref{salinitytrend}. To compare, we also applied Sp.T. ordinary kriging models to the original raw data.  The dependence structures were estimated by fitting the Sp.T. semivariograms (9 separable, 9 product-sum, 3 metric) to the empirical Sp.T. semivariograms (computed from both the residual and the original  data). Thus, in total $2\times (9+9+3)=42$ Sp.T. models were fitted to the data and then evaluated by FCV. 



\begin{table}[!htbp]
\small
\caption{Prediction performance of different Sp.T. kriging models for the salinity data. For each type of trend and Sp.T. variogram model, the (minimum) MSPE is reported. The numbers in parentheses represent the average computational time in seconds over the corresponding estimated models.} 
\centering
\begin{tabular}{Sc|ScScSc}
 \hline\hline
& \multicolumn{3}{c}{MSPE}\\
Trend & Separable& Product-sum & Metric  \\
\hline \addlinespace
No trend& 42.80 (17.5) & 42.85 (30.2) &43.80 (24.7)\\
Trend& 38.47 (19.4) & {\bf\textcolor{red}{38.15}} (31.9) &40.43 (25.8)\\  \addlinespace
\hline
\end{tabular}
\normalsize
\label{table2}
\end{table}

Table \ref{table2} presents the best Sp.T. models, in terms of minimum MSPE, within the three groups of dependence structure (separable, product-sum and metric) with and without deterministic trend. The numbers in brackets report the corresponding average computational times in seconds over the estimated models. The lowest MSPEs, being approximately of the same magnitude as those for the ROKFD models, were obtained by the Sp.T. separable and the product-sum universal kriging models (cf. Figure \ref{salinity_fvariogram_mspe}, right panel). It is also noted that the best Sp.T. ordinary kriging models, when applied on the original data, gave about the same size of the MSPEs as the best OKFD model. Figure \ref{figure_predictions_salinity} illustrates the predictions together with the observed salinity data at locations LBA and CCG (corresponding to the locations with the largest and smallest prediction errors, respectively) using the best ROKFD and Sp.T. kriging models. The predictions obtained by the two methods (ROKFD and the Sp.T. product-sum universal kriging) are very similar with the predictions in CCG performing good whereas the predictions in LBA (the farthest considered station in our data set, see Figure \ref{salinity1}, top panel) performing not as good. The computational times for the Sp.T. models (taking approximately 15-30 seconds per model, cf. Table \ref{table2}) are much higher than the ones for OKFD and ROKFD (taking approximately 0.2 seconds per model).

\begin{figure}[!htb]
\begin{center}
\includegraphics[width=\textwidth]{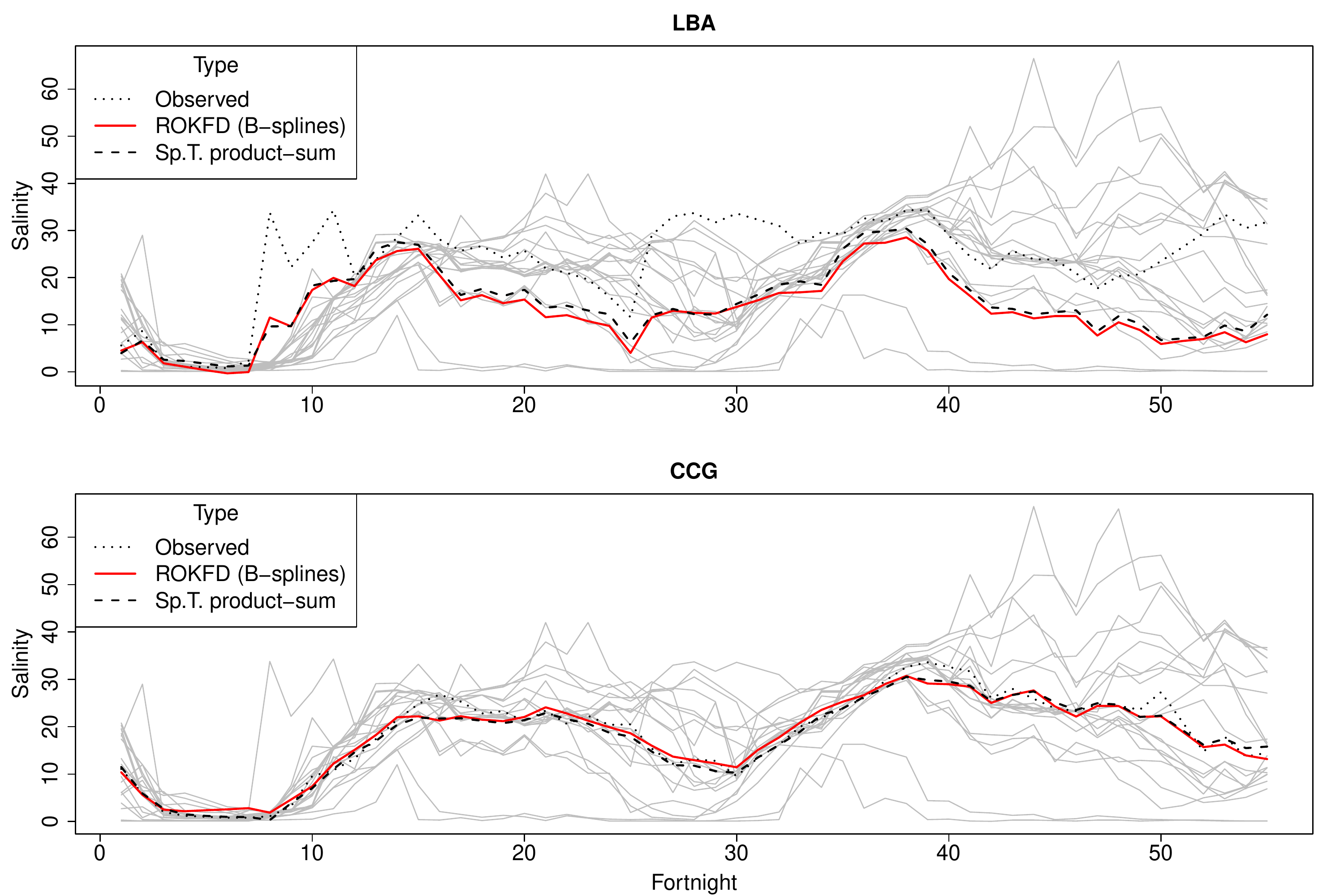}
\caption{Predicted temperatures at the locations LBA (top panel) and CCG (bottom panel) obtained by the best ROKFD model (solid grey line) and the best Sp.T. model (dashed black line) together with the observed (dotted) values.}
\label{figure_predictions_salinity}
\end{center}
\end{figure}

\section{Concluding remarks}
\label{conclusions}

In this paper we have presented and compared functional and Sp.T. kriging approaches to predict spatial functional random processes. Comparisons with respect to prediction performance and computational time has been performed, mainly through a simulation study and two real data sets. We restricted the comparison to Sp.T. kriging versus the functional kriging method OKFD, since the more flexible functional kriging approaches PWFK and FKTM coincide with OKFD in several situations (Sections \ref{secPWFK} and \ref{secFKTM}). 
Here we also contribute with new knowledge by proving that OKFD and PWFK coincide under certain conditions. 

Based on the simulation study and the analyses of the two data sets, we observed that the prediction performance (in terms of functional cross-validation) of OKFD normally was improved when the number of basis functions used to represent the functional data increased.
Furthermore, OKFD typically performed similarly or better than the Sp.T. kriging models for small and medium sample sizes. 
This is likely due to the more complex task of finding good estimates of the Sp.T. variogram compared to the trace-variograms used in OKFD, since trace-variograms have one dimension less. The large number of choices of Sp.T. variogram models and parameters to estimate makes the Sp.T. estimation process more vulnerable, especially for small data sets. 
For larger sample sizes, the Sp.T. kriging starts to perform better for the stationary Sp.T. processes, whereas OKFD continues to work best for the non-stationary Sp.T. (but stationary functional) processes. 
We also noted a clear tendency for OKFD to perform better relative to Sp.T. kriging, the stronger the temporal- and the weaker the spatial dependence considered.

For all considered cases, OKFD was computationally considerably faster than the Sp.T. kriging models.
The large matrices that need to be inverted in order to perform Sp.T. kriging prediction at each location, is the major reason for this fact.
One way to reduce the computational time for the Sp.T. kriging models could be to use only the local neighbourhood (e.g. the k closest neighbouring locations) when prediction is made. This can often be done without much loss in prediction performance.

The purpose of this study has been to shed light on the relative merits of functional and Sp.T. kriging methods for prediction of spatial functional random processes.
While functional kriging predicts complete curves on a given (time) domain, given observations on the same domain, the Sp.T. kriging methods make (a raster of) pointwise predictions of the curves and are not restricted to a given (time) domain. 
Experience from this study concludes that prediction performance of the two kriging approaches (functional and Sp.T.) in general is rather equal for stationary Sp.T. processes, with a tendency for functional kriging to work slightly better for small sample sizes and Sp.T. kriging to work slightly better for large sample sizes. For non-stationary Sp.T. processes, e.g. the presence of a common deterministic time trend and/or time varying variances and dependence structure, do not demand any extra modeling for functional kriging, whereas identification and modeling of trend and/or time varying dependence is necessary for Sp.T. kriging. From a modelers perspective, the Sp.T. kriging methods demands more work with a larger risk of choosing a suboptimal model. Moreover, from  a computational perspective functional kriging is substantially faster than Sp.T. kriging.

\newpage
\appendix

\section{Proof of Proposition \ref{prop}}
\label{pwfkEokfd}
Below we present a proof of Proposition \ref{prop}. The proof relies on Lemma \ref{induction_lemma}, which we first state and prove.
\begin{lem}
\label{induction_lemma}
Assume that we have a symmetric matrix of the form
\begin{equation*}
 \mathbf{Q}=\begin{pmatrix}
 \mathbf{W} & \mathbf{W}-c_{12}\mathbf{G} & \cdots & \mathbf{W}-c_{1n}\mathbf{G} &\mathbf{I} \\
  \mathbf{W}-c_{21}\mathbf{G} & \mathbf{W}  & \cdots & \mathbf{W}-c_{2n}\mathbf{G}&\mathbf{I} \\
  \vdots  & \vdots  & \ddots & \vdots&\vdots  \\
    \mathbf{W}-c_{n1}\mathbf{G}  &  \mathbf{W}-c_{n2}\mathbf{G}  & \cdots & \mathbf{W} &\mathbf{I} \\
  \mathbf{I}&\mathbf{I}&\cdots&\mathbf{I}&\mathbf{0}
 \end{pmatrix}_{k(n+1)\times k(n+1)},\\
 \end{equation*}
where $\mathbf{W}$ and $\mathbf{G}$ are symmetric $k\times k$ matrices, $\mathbf{I}$ is a $k\times k$ identity matrix, and $c_{ij}=c_{ji}$ for $i\neq j,$ $i,j=1,...,n$ are constants. Given that the inverse of $\mathbf{G}$ exists, the inverse of $\mathbf{Q}$ satisfies
 \begin{equation}
 \label{inverse}
  \mathbf{Q}^{-1}=\begin{pmatrix}
 k_{11} \mathbf{G}^{-1} &  k_{12} \mathbf{G}^{-1}  & \cdots &  k_{1n} \mathbf{G}^{-1}  &k_{1}\mathbf{I} \\
 k_{21} \mathbf{G}^{-1} &  k_{22} \mathbf{G}^{-1}  & \cdots &  k_{2n} \mathbf{G}^{-1} &k_{2}\mathbf{I} \\
  \vdots  & \vdots  & \ddots & \vdots&\vdots  \\
 k_{n1} \mathbf{G}^{-1}   &  k_{n2} \mathbf{G}^{-1}  & \cdots &  k_{nn} \mathbf{G}^{-1} &k_{n}\mathbf{I} \\
  k_{1}\mathbf{I}&k_{2}\mathbf{I}&\cdots&k_{n}\mathbf{I}&c\mathbf{G}-\mathbf{W}
 \end{pmatrix}_{k(n+1)\times k(n+1)}, \\ 
 \end{equation}
where the $k_{ij}$'s, $k_i$'s, and $c$ are constants, determined by the $c_{ij}$'s, such that $k_{ij}=k_{ji}$ for $i,j=1,...,n$, $\sum_{i=1}^n k_{ij}(=\sum_{j=1}^n k_{ij})=0$ for all $j$'s (and $i$'s), and $\sum_{i=1}^n k_i=1$. \\
\end{lem}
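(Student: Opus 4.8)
The plan is to verify (\ref{inverse}) directly: postulate that $\mathbf{Q}^{-1}$ equals the right-hand side $\mathbf{P}$ of (\ref{inverse}) for as-yet-undetermined scalars $k_{ij},k_i,c$, and then check the single identity $\mathbf{Q}\mathbf{P}=\mathbf{I}_{k(n+1)}$ block by block. Since $\mathbf{Q}$ is square, a right inverse is automatically a two-sided inverse, so this one-sided check suffices (and $\mathbf{Q}$ is in fact symmetric, using $c_{ij}=c_{ji}$ and the symmetry of $\mathbf{W}$ and $\mathbf{G}$, as is $\mathbf{P}$ once $k_{ij}=k_{ji}$). Setting $c_{ii}:=0$, I would compute the four kinds of blocks of $\mathbf{Q}\mathbf{P}$: the interior $(i,j)$ blocks with $i,j\le n$, the two kinds of border blocks, and the bottom-right corner.

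The mechanism that makes the computation collapse is the following. An interior $(i,j)$ block of $\mathbf{Q}\mathbf{P}$ equals $\sum_{l=1}^n(\mathbf{W}-c_{il}\mathbf{G})(k_{lj}\mathbf{G}^{-1})+k_j\mathbf{I}=\big(\sum_l k_{lj}\big)\mathbf{W}\mathbf{G}^{-1}-\big(\sum_l c_{il}k_{lj}\big)\mathbf{I}+k_j\mathbf{I}$. Although $\mathbf{W}$ and $\mathbf{G}$ need not commute, the troublesome factor $\mathbf{W}\mathbf{G}^{-1}$ is multiplied by the scalar $\sum_l k_{lj}$, so the constraint $\sum_l k_{lj}=0$ annihilates it and leaves the purely scalar requirement $k_j-\sum_l c_{il}k_{lj}=\delta_{ij}$. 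The same cancellation governs the $(i,n+1)$ block, which equals $\big(\sum_l k_l\big)\mathbf{W}-\big(\sum_l c_{il}k_l\big)\mathbf{G}+c\mathbf{G}-\mathbf{W}$: here $\sum_l k_l=1$ removes the $\mathbf{W}$ terms and leaves $\sum_l c_{il}k_l=c$. Finally the $(n+1,j)$ and $(n+1,n+1)$ blocks reproduce exactly the stated constraints $\sum_l k_{lj}=0$ and $\sum_l k_l=1$.

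Thus the lemma reduces to producing scalars satisfying $k_j-\sum_l c_{il}k_{lj}=\delta_{ij}$, $\sum_l c_{il}k_l=c$, $\sum_l k_{lj}=0$, $\sum_l k_l=1$, with $k_{ij}=k_{ji}$. I would obtain all of these simultaneously from one bordered scalar matrix. With $C=(c_{il})_{i,l=1}^n$ (so $C$ has zero diagonal), put
\[
\mathbf{M}=\begin{pmatrix}-C&\mathbf{1}\\ \mathbf{1}^\intercal&0\end{pmatrix},\qquad \mathbf{M}^{-1}=\begin{pmatrix}A&\mathbf{b}\\ \mathbf{b}^\intercal&d\end{pmatrix},
\]
the latter symmetric because $\mathbf{M}$ is. Defining $k_{ij}:=A_{ij}$, $k_i:=b_i$ and $c:=d$, the four block equations of $\mathbf{M}\mathbf{M}^{-1}=I_{n+1}$ deliver everything at once: $-CA+\mathbf{1}\mathbf{b}^\intercal=I_n$ is the relation $k_j-\sum_l c_{il}k_{lj}=\delta_{ij}$; $-C\mathbf{b}+d\mathbf{1}=\mathbf{0}$ is $\sum_l c_{il}k_l=c$ and in particular shows this sum is independent of $i$; $\mathbf{1}^\intercal A=\mathbf{0}^\intercal$ and $\mathbf{1}^\intercal\mathbf{b}=1$ are the two summation constraints; and symmetry of $A$ gives $k_{ij}=k_{ji}$.

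The one real obstacle is to guarantee that these scalars exist, i.e. that $\mathbf{M}$ is invertible. One direction is free: if $\mathbf{M}$ is invertible the construction above produces a $\mathbf{P}$ of the stated form with $\mathbf{Q}\mathbf{P}=\mathbf{I}$, so $\mathbf{Q}$ is invertible and $\mathbf{P}=\mathbf{Q}^{-1}$. Since the lemma presupposes that $\mathbf{Q}^{-1}$ exists, its content hinges on the reverse implication, that invertibility of $\mathbf{Q}$ (with that of $\mathbf{G}$) forces $\mathbf{M}$ to be invertible; note that invertibility of $\mathbf{G}$ alone does not suffice, since all $c_{ij}=0$ makes both $\mathbf{Q}$ and $\mathbf{M}$ singular for $n\ge2$. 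I expect this reverse implication to be the delicate step, and the natural way to settle it is by the contrapositive: from a null vector $(\mathbf{x},y)$ of $\mathbf{M}$ one builds the null vector of $\mathbf{Q}$ with $i$-th block $x_i\mathbf{z}$ and last block $y\mathbf{G}\mathbf{z}$ (for any $\mathbf{z}\neq\mathbf{0}$), using $\mathbf{1}^\intercal\mathbf{x}=0$ and $C\mathbf{x}=y\mathbf{1}$ to make every block vanish. Alternatively one can organize the existence of the constants and their summation properties by induction on $n$, as the lemma's name suggests; once the scalars are in hand, the identity $\mathbf{Q}\mathbf{P}=\mathbf{I}$ follows routinely from the collapse described above.
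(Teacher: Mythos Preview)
Your proposal is correct and takes a genuinely different route from the paper. The paper proves the lemma by induction on $n$: the base case $n=2$ is computed explicitly via block inversion, and in the inductive step the first block row/column is peeled off, the Schur-complement block-inversion formula is applied, and the resulting blocks are checked one by one to have the claimed form with new constants $k_{ij}^{*},k_i^{*},c^{*}$, followed by a verification of the summation constraints $\sum_i k_{ij}^{*}=0$ and $\sum_i k_i^{*}=1$ using the inductive hypothesis.

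Your argument instead exploits the tensor-like structure of $\mathbf{Q}$ directly: the block identity $\mathbf{Q}\mathbf{P}=\mathbf{I}$ collapses (after the key cancellations you isolate) to the single scalar identity $\mathbf{M}\mathbf{M}^{-1}=I_{n+1}$ for the bordered $(n{+}1)\times(n{+}1)$ matrix $\mathbf{M}=\bigl(\begin{smallmatrix}-C&\mathbf{1}\\ \mathbf{1}^{\intercal}&0\end{smallmatrix}\bigr)$, so the constants $k_{ij},k_i,c$ are read off as the entries of $\mathbf{M}^{-1}$ all at once, with the summation and symmetry properties coming for free from the block equations and the symmetry of $\mathbf{M}$. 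The existence of the constants is then equivalent to the invertibility of $\mathbf{M}$, which you deduce from that of $\mathbf{Q}$ by the null-vector construction $(x_1\mathbf{z},\ldots,x_n\mathbf{z},y\mathbf{G}\mathbf{z})$; this step is clean and works exactly as you describe. Compared to the paper's induction, your approach is shorter, more conceptual (it makes transparent \emph{why} the inverse has this form and \emph{what} the constants are), and it handles the existence issue more carefully---indeed your observation that invertibility of $\mathbf{G}$ alone is insufficient (take all $c_{ij}=0$) is a point the paper's statement glosses over. The paper's induction, in turn, avoids introducing the auxiliary scalar matrix and yields recursive formulas for the constants, at the cost of heavier bookkeeping.
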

Note that the $k_{ij}$'s and the $k_i$'s also change with $n$. However, for notational simplicity we suppress the dependence of $n$ in the $k_{ij}$'s and the $k_i$'s.
\begin{proof}[Proof of Lemma \ref{induction_lemma}]
The proof of \eqref{inverse} is done by induction. The proof uses the following, equivalent, block matrix inversion formulas
   \begin{multline}
   \label{B2}
\begin{pmatrix}
  \mathbf{A} & \mathbf{B} \\
    \mathbf{C} & \mathbf{D} \\
 \end{pmatrix}^{-1} = \begin{pmatrix}
  \mathbf{A}^{-1}+ \mathbf{A}^{-1} \mathbf{B} (\mathbf{D}-\mathbf{C}\mathbf{A}^{-1}\mathbf{B})^{-1}\mathbf{C} \mathbf{A}^{-1}& -\mathbf{A}^{-1} \mathbf{B} (\mathbf{D}-\mathbf{C}\mathbf{A}^{-1}\mathbf{B})^{-1}\\
  -(\mathbf{D}-\mathbf{C}\mathbf{A}^{-1}\mathbf{B})^{-1}\mathbf{C} \mathbf{A}^{-1} &(\mathbf{D}-\mathbf{C}\mathbf{A}^{-1}\mathbf{B})^{-1} \\
 \end{pmatrix}=\\ \begin{pmatrix}
  (\mathbf{A}-\mathbf{B}\mathbf{D}^{-1}\mathbf{C})^{-1}& -(\mathbf{A}-\mathbf{B}\mathbf{D}^{-1}\mathbf{C})^{-1}\mathbf{B}\mathbf{D}^{-1}\\
 -\mathbf{D}^{-1}\mathbf{C} (\mathbf{A}-\mathbf{B}\mathbf{D}^{-1}\mathbf{C})^{-1} &\mathbf{D}^{-1}+\mathbf{D}^{-1}\mathbf{C} (\mathbf{A}-\mathbf{B}\mathbf{D}^{-1}\mathbf{C})^{-1}\mathbf{B}\mathbf{D}^{-1} \\
 \end{pmatrix}=\begin{pmatrix}
  \mathbf{E} & \mathbf{F} \\
    \mathbf{H} & \mathbf{K} \\
 \end{pmatrix},
 \bigskip
 \end{multline}
where $\mathbf{A}$ and $\mathbf{D}$ are square matrices allowed to be of different size.

For $n=2$ we let
\begin{equation}
  \mathbf{Q}=\left(
\begin{array}{c|cc}
  \label{inv2}
  \mathbf{W} & \mathbf{W}-c_{12}\mathbf{G} & \mathbf{I} \\ \hline
    \mathbf{W}-c_{21}\mathbf{G} & \mathbf{W} & \mathbf{I}\\
 \mathbf{I}  & \mathbf{I} & \mathbf{I} \\
\end{array}
\right)=\begin{pmatrix}
  \mathbf{A} & \mathbf{B} \\
    \mathbf{C} & \mathbf{D} \\
 \end{pmatrix}.
\end{equation}
By \eqref{B2} the inverse of $\mathbf{D}$ is
  \begin{equation}
  \label{inv1}
  \mathbf{D}^{-1}=\begin{pmatrix}
  \mathbf{W} & \mathbf{I} \\
    \mathbf{I} & \mathbf{0} \\
 \end{pmatrix}^{-1} =\begin{pmatrix}
  \mathbf{0} & \mathbf{I} \\
    \mathbf{I} & -\mathbf{W} \\
 \end{pmatrix}.
 \bigskip
 \end{equation}
From \eqref{B2} and \eqref{inv1} it is straight forward to verify that the inverse of \eqref{inv2} can be expressed as 
  \begin{equation*}
\mathbf{Q}^{-1}=\begin{pmatrix} \frac{1}{2c_{12}}\mathbf{G}^{-1}&-\frac{1}{2c_{12}}\mathbf{G}^{-1}&\frac{1}{2}\mathbf{I}\\ \\
-\frac{1}{2c_{12}}\mathbf{G}^{-1}&\frac{1}{2c_{12}}\mathbf{G}^{-1}&\frac{1}{2}\mathbf{I}\\ \\
\frac{1}{2}\mathbf{I}&\frac{1}{2}\mathbf{I}&\frac{c_{12}}{2}\mathbf{G}-\mathbf{W}\\
 \end{pmatrix}=\begin{pmatrix} k_{11}\mathbf{G}^{-1}&k_{12}\mathbf{G}^{-1}&k_{1}\mathbf{I}\\ \\
k_{21}\mathbf{G}^{-1}&k_{22}\mathbf{G}^{-1}&k_{2}\mathbf{I}\\ \\
k_{1}\mathbf{I}&k_{2}\mathbf{I}&c\mathbf{G}-\mathbf{W}\\
 \end{pmatrix},
\end{equation*}
where $k_{12}=k_{21}$, $\sum_{i=1}^2 k_{ij}=1/2c_{12}-1/2c_{12}=0$, and $\sum_{i=1}^2k_i=1/2+1/2=1$. Thus, \eqref{inverse} holds for $n=2$.
 
Assume now that \eqref{inverse} holds for $n=m-1$. The proof is complete if we can show that \eqref{inverse} holds for $n=m$. For $n=m$ we split the matrix $\mathbf{Q}$ in four blocks
 \begin{equation*}
  \mathbf{Q}=\left(
\begin{array}{c|cccc}
 \mathbf{W} & \mathbf{W}-c_{12}\mathbf{G} & \cdots & \mathbf{W}-c_{1m}\mathbf{G} &\mathbf{I} \\ \hline
  \mathbf{W}-c_{21}\mathbf{G} & \mathbf{W}  & \cdots & \mathbf{W}-c_{2m}\mathbf{G}&\mathbf{I} \\
  \vdots  & \vdots  & \ddots & \vdots&\vdots  \\
    \mathbf{W}-c_{m1}\mathbf{G}  &  \mathbf{W}-c_{m2}\mathbf{G}  & \cdots & \mathbf{W} &\mathbf{I} \\
  \mathbf{I}&\mathbf{I}&\cdots&\mathbf{I}&\mathbf{0}
\end{array}
\right)_{k(m+1)\times k(m+1)}=\begin{pmatrix}
  \mathbf{A} & \mathbf{B} \\
    \mathbf{C} & \mathbf{D} \\
 \end{pmatrix}.
\end{equation*}
 By assumption the inverse of $\mathbf{D}$ is of the form $\eqref{inverse}$. Using this fact we have that 
 \begin{multline}
 \label{BD}
 \mathbf{B}\mathbf{D}^{-1}=\Bigg( \sum_{i=1}^{m-1} \Big(\mathbf{W}-c_{1(i+1)}\mathbf{G}\Big)k_{i1}\mathbf{G}^{-1}+k_1\mathbf{I},...,\sum_{i=1}^{m-1} \Big(\mathbf{W}-c_{1(i+1)}\mathbf{G}\Big)k_{i(m-1)}\mathbf{G}^{-1}+k_{m-1}\mathbf{I}, \\ \sum_{i=1}^{m-1} \Big(\mathbf{W}-c_{1(i+1)}\mathbf{G}\Big)k_i+c\mathbf{G}-\mathbf{W}\Bigg)=\\ \Bigg( \Big(k_1-\sum_{i=1}^{m-1}c_{1(i+1)}k_{i1}\Big)\mathbf{I},..., \Big(k_{m-1}-\sum_{i=1}^{m-1}c_{1(i+1)}k_{i(m-1)}\Big)\mathbf{I}, \Big(c-\sum_{i=1}^{m-1}c_{1(i+1)}k_i\Big)\mathbf{G} \Bigg),
 \end{multline}
 where the second equality uses the induction hypothesis that $\sum_{i=1}^{m-1} k_{ij}=0$ for all $j=1,...,m-1$, and $\sum_{i=1}^{m-1}k_i=1$. Further, it is relative straight forward, using the induction hypothesis, to verify that
 \begin{multline*}
 \mathbf{A}-\mathbf{B}\mathbf{D}^{-1}\mathbf{C}=\mathbf{W}-\Bigg(\sum_{j=1}^{m-1}\Big(k_j-\sum_{i=1}^{m-1}c_{1(i+1)}k_{ij}\Big)\Big(\mathbf{W}-c_{1(j+1)}\mathbf{G}\Big)+\Big(c-\sum_{i=1}^{m-1}c_{1(i+1)}k_i\Big)\mathbf{G}\Bigg)=\\  \Bigg(2\sum_{i=1}^{m-1}c_{1(i+1)}k_i- \sum_{i=1}^{m-1}\sum_{j=1}^{m-1} c_{1(i+1)}c_{1(j+1)}k_{ij}-c \Bigg) \mathbf{G}=k^{*}\mathbf{G}.
 \end{multline*}
Hence, 
 \begin{equation}
 \label{E}
 \mathbf{E}=(\mathbf{A}-\mathbf{B}\mathbf{D}^{-1}\mathbf{C})^{-1}=\frac{1}{k^{*}}\mathbf{G}^{-1}=k_{11}^{*}\mathbf{G}^{-1}.\\
 \end{equation}
Combining \eqref{BD} and \eqref{E}, we obtain
 \begin{multline*}
 \mathbf{F}=-\mathbf{E}\mathbf{B}\mathbf{D}^{-1}=\\ -\frac{1}{k^{*}}\Bigg( \Big(k_1-\sum_{i=1}^{m-1}c_{1(i+1)}k_{i1}\Big)\mathbf{G}^{-1},..., \Big(k_{m-1}-\sum_{i=1}^{m-1}c_{1(i+1)}k_{i(m-1)}\Big)\mathbf{G}^{-1}, \Big(c-\sum_{i=1}^{m-1}c_{1(i+1)}k_i\Big)\mathbf{I} \Bigg)=\\ \Big( k_{12}^{*}\mathbf{G}^{-1},...,k_{1m}^{*}\mathbf{G}^{-1}, k_1^{*} \mathbf{I} \Big),
 \end{multline*}
and due to the fact that $\mathbf{D}^{-1}$ and $\mathbf{E}$ are symmetric (using that the inverse of a symmetric matrix is again symmetric) and $\mathbf{C}=\mathbf{B}^{\intercal}$ we also have that
  \begin{equation}
  \label{H}
 \mathbf{H}=-\mathbf{D}^{-1}\mathbf{C}\mathbf{E}=-(\mathbf{E}\mathbf{B}\mathbf{D}^{-1})^{\intercal}=\mathbf{F}^{\intercal}.
 \end{equation} 
Moreover, by combining \eqref{BD}, \eqref{H} and the inverse of $\mathbf{D}$ it can also be shown that the $(i,j)$th $k\times k$ block matrix entry in the $km\times km$ matrix $\mathbf{K}=\mathbf{D}^{-1}-\mathbf{H}\mathbf{B}\mathbf{D}^{-1}=\{ \mathbf{K}_{ij} \}_{i,j=1,...,m}$ equals 
\begin{equation}
\label{klika}
\mathbf{K}_{ij}=\Bigg( k_{ij}+\frac{1}{k^{*}}\Big(k_i-\sum_{l=1}^{m-1}c_{1(l+1)}k_{li}\Big)\Big(k_j-\sum_{l=1}^{m-1}c_{1(l+1)}k_{lj}\Big)\Bigg) \mathbf{G}^{-1}=k_{(i+1)(j+1)}^{*}\mathbf{G}^{-1}, \hspace{5pt} i,j=1,...,m-1,
\end{equation}
\begin{equation*}
\mathbf{K}_{im}=\mathbf{K}_{mi}=\Bigg( k_{i}+\frac{1}{k^{*}}\Big(k_i-\sum_{l=1}^{m-1}c_{1(l+1)}k_{li}\Big)\Big(c-\sum_{l=1}^{m-1}c_{1(l+1)}k_{l}\Big)\Bigg) \mathbf{I}=k_{i+1}^{*}\mathbf{I}, \hspace{5pt} i=1,...,m-1,
\end{equation*}
and
 \begin{equation*}
\mathbf{K}_{mm}=c\mathbf{G}-\mathbf{W}+\frac{1}{k^{*}}\Big( c+\sum_{l=1}^{m-1}c_{1(l+1)}k_{l}\Big)^2 \mathbf{G}=c^{*}\mathbf{G}-\mathbf{W}.
\end{equation*}
Thus, the inverse $\mathbf{Q}^{-1}=\begin{pmatrix}
  \mathbf{E} & \mathbf{F} \\
    \mathbf{H} & \mathbf{K} \\
 \end{pmatrix}$ is of the same form as in $\eqref{inverse}$. It now remains to show that the $k^{*}$-coefficients satisfy the same conditions as the $k$'s in Lemma \ref{induction_lemma}. From \eqref{E} and \eqref{klika} and the fact that $k_{ij}=k_{ji}$, $i,j=1,...,m-1$, and that $\mathbf{H}=\mathbf{F}^{\intercal}$, we have that $k^{*}_{ij}=k^{*}_{ji}$, $i,j=1,...,m$. Moreover, by the induction hypothesis, for $j=1$ in $k^{*}_{ij}$, we obtain
 \begin{equation*}
 \sum_{i=1}^{m}k_{i1}^{*}=\frac{1}{k^{*}}-\frac{1}{k^{*}}\sum_{i=2}^{m}\Big( k_{i-1}-\sum_{l=1}^{m-1}c_{1(l+1)}k_{l(i-1)}\Big)=\frac{1}{k^{*}}-\frac{1}{k^{*}}=0,
 \end{equation*}
 and for $j=2,...,m$ we have
 \begin{multline*}
 \sum_{i=1}^{m}k_{ij}^{*}= -\frac{1}{k^{*}}\Big( k_{j-1}-\sum_{l=1}^{m-1}c_{1(l+1)}k_{l(j-1)}\Big)\\+\sum_{i=2}^{m}\Bigg( k_{(i-1)(j-1)}+\frac{1}{k^{*}}\Big(k_{i-1}-\sum_{l=1}^{m-1}c_{1(l+1)}k_{l(i-1)}\Big)\Big(k_{j-1}-\sum_{l=1}^{m-1}c_{1(l+1)}k_{l(j-1)}\Big)\Bigg)=\\ -\frac{1}{k^{*}}\Big( k_{j-1}-\sum_{l=1}^{m-1}c_{1(l+1)}k_{l(j-1)}\Big)+\frac{1}{k^{*}}\Big( k_{j-1}-\sum_{l=1}^{m-1}c_{1(l+1)}k_{l(j-1)}\Big)=0.
 \end{multline*}
 Also,
 \begin{multline*}
 \sum_{i=1}^{m}k_i^{*}=-\frac{1}{k^{*}}\Big( c-\sum_{l=1}^{m-1}c_{1(l+1)}k_l\Big)+\sum_{i=2}^{m}\Bigg( k_{i-1}+\frac{1}{k^{*}}\Big(k_{i-1}-\sum_{l=1}^{m-1}c_{1(l+1)}k_{l(i-1)}\Big)\Big(c-\sum_{l=1}^{m-1}c_{1(l+1)}k_{l}\Big)\Bigg)=\\ -\frac{1}{k^{*}}\Big( c-\sum_{l=1}^{m-1}c_{1(l+1)}k_l \Big)+1+\frac{1}{k^{*}}\Big( c-\sum_{l=1}^{m-1}c_{1(l+1)}k_l\Big)=1.
 \end{multline*}
Hence, formula \eqref{inverse} holds for $n=m$, and by the induction principle we have that \eqref{inverse} is true for all integers $n$ larger than 1. We have thus proved Lemma \ref{induction_lemma}.
\end{proof}

\begin{proof}[Proof of Proposition \ref{prop}]
Under the assumption in Proposition \ref{prop} we here show that the coefficients of the functional kriging weights \eqref{eqn:Lbasisexp} of PWFK, which are obtained by minmising \eqref{eqn:mise} subject to the unbiasedness constraint of the predictor ($\sum_{i=1}^n \lambda_i(t)=1$, for all $t\in T$), yields weights that are constant over time, i.e., $\lambda_i(t)=\mathbf{b}_i^{\intercal} \mathbf{B}_{\lambda}(t)=\lambda_i$, $i=1,...,n$. \cite{giraldo2010continuous} showed that the solution of the optimisation problem is given by the solution of the system $\mathbf{Q} \boldsymbol{\beta}=\mathbf{J}    \implies   \hat{\boldsymbol{\beta}}=\mathbf{Q}^{-1}\mathbf{J}$, where
\begin{equation*}
\mathbf{Q} = 
 \begin{pmatrix}
  \mathbf{Q}_1 & \mathbf{Q}_{12} & \cdots & \mathbf{Q}_{1n}&\mathbf{I} \\
  \mathbf{Q}_{21} & \mathbf{Q}_{2} & \cdots & \mathbf{Q}_{2n}&\mathbf{I} \\
  \vdots  & \vdots  & \ddots & \vdots&\vdots  \\
   \mathbf{Q}_{n1} & \mathbf{Q}_{n2} & \cdots & \mathbf{Q}_{n}&\mathbf{I} \\
  \mathbf{I}&\mathbf{I}&\cdots&\mathbf{I}&\mathbf{0}
 \end{pmatrix}, \hspace{20pt} 
 \boldsymbol{\beta}=\begin{pmatrix} \mathbf{b}_1 \\ \mathbf{b}_2\\ \vdots \\ \mathbf{b}_n\\ \mathbf{m} \end{pmatrix}, \hspace{10pt} \text{and} \hspace{10pt} 
 \mathbf{J}=\begin{pmatrix} \mathbf{J_1}\\ \mathbf{J_2}\\ \vdots \\ \mathbf{J_n}\\ \mathbf{c} \end{pmatrix},
 \end{equation*}
 where 
  \begin{equation}
 \mathbf{Q}_i=\int_{T} \mathbf{B}_{\lambda}(t)\mathbf{B}^{\intercal}(t)V(\mathbf{a}_i)\mathbf{B}(t)\mathbf{B}_{\lambda}^{\intercal}(t)dt=\int_{T} \sigma^2_i(t)\mathbf{B}_{\lambda}(t)\mathbf{B}_{\lambda}^{\intercal}(t)dt,
 \label{f1}
 \end{equation}
 \begin{equation}
 \mathbf{Q}_{ij}=\int_{T} \mathbf{B}_{\lambda}(t)\mathbf{B}^{\intercal}(t)C(\mathbf{a}_i,\mathbf{a}_j)\mathbf{B}^{}(t)\mathbf{B}_{\lambda}^{\intercal}(t)dt=\int_{T} \sigma^2_{ij}(t)\mathbf{B}_{\lambda}(t)\mathbf{B}_{\lambda}^{\intercal}(t)dt,
 \label{f2}
 \end{equation}
  \begin{equation}
 \mathbf{J}_i=\int_{T} \mathbf{B}_{\lambda}(t)\mathbf{B}^{\intercal}(t)C(\mathbf{a}_0,\mathbf{a}_i)\mathbf{B}(t)dt=\int_{T} \sigma^2_{0i}(t)\mathbf{B}_{\lambda}(t)dt,
  \label{f3}
 \end{equation}
 $\mathbf{m}^{\intercal}=(m_1,...,m_K)$ are the K Lagrangian multipliers and $\mathbf{c}$ is an unbiasedness constraint vector satisfying $\mathbf{c}^{\intercal}\mathbf{B}_{\lambda}(t)$=1.
 
By the assumption we have that $\mathbf{a}_i=\mathbf{P}\mathbf{r}_i$ with $V(\mathbf{r}_i)=\mathbf{D}(0)=\sigma^2\mathbf{I}$ and $C(\mathbf{r}_i,\mathbf{r}_j)=\mathbf{D}(h_{ij})=(\sigma^2-\gamma(h_{ij}))\mathbf{I}$, where 
$\gamma(h)$ is the common semivariogram function depending on the distance $h_{ij}=h_{ji}=\|s_i-s_j\|$ between two locations $s_i$ and $s_j$. Therefore, $\sigma_i(t)$ and $\sigma_{ij}(t)$ in expressions \eqref{f1}, \eqref{f2} and \eqref{f3} equals 
\begin{equation*}
\sigma^2_i(t)=\mathbf{B}^{\intercal}(t)V(\mathbf{a}_i)\mathbf{B}^{}(t)=\mathbf{B}^{\intercal}(t)\mathbf{P}\mathbf{D}(0)\mathbf{P}^{\intercal}\mathbf{B}(t)=\sigma^2(t), \hspace{10pt} i=1,...,n,
\end{equation*}
and
\begin{equation}
\label{sigmaij}
\sigma_{ij}(t)= \mathbf{B}^{\intercal}(t)C(\mathbf{a}_i,\mathbf{a}_j)\mathbf{B}(t)=       \mathbf{B}^{\intercal}(t)\mathbf{P}\mathbf{D}(h)\mathbf{P}^{\intercal}\mathbf{B}(t) = 
\sigma^2(t)-\gamma(h_{ij})f(t) \hspace{10pt} i<j, i,j=0,1,...,n,
\end{equation}
where $f(t)= \mathbf{B}^{\intercal}(t)\mathbf{P}\mathbf{I}\mathbf{P}^{\intercal}\mathbf{B}(t)$. Thus, $\mathbf{Q}_i=\mathbf{W}$ and $\mathbf{Q}_{ij}=\mathbf{W}-\gamma(h_{ij})\mathbf{G}$ for $i,j=1,...,n$, where $\mathbf{W}=\int_{T} \sigma^2(t)\mathbf{B}_{\lambda}(t)\mathbf{B}_{\lambda}^{\intercal}(t)dt$ and $\mathbf{G}=\int_{T} f(t)\mathbf{B}_{\lambda}(t)\mathbf{B}_{\lambda}^{\intercal}(t)dt$. Hence, the system we want to solve $\mathbf{Q}\boldsymbol{\beta}=\mathbf{J}$ may be expressed as
\begin{equation*}
 \begin{pmatrix}
 \mathbf{W} & \mathbf{W}-\gamma(h_{12})\mathbf{G} & \cdots & \mathbf{W}-\gamma(h_{1n})\mathbf{G} &\mathbf{I} \\
  \mathbf{W}-\gamma(h_{21})\mathbf{G} & \mathbf{W}  & \cdots & \mathbf{W}-\gamma(h_{2n})\mathbf{G}&\mathbf{I} \\
  \vdots  & \vdots  & \ddots & \vdots&\vdots  \\
    \mathbf{W}-\gamma(h_{n1})\mathbf{G}  &  \mathbf{W}-\gamma(h_{n2})\mathbf{G}  & \cdots & \mathbf{W} &\mathbf{I} \\
  \mathbf{I}&\mathbf{I}&\cdots&\mathbf{I}&\mathbf{0}
 \end{pmatrix}
 \begin{pmatrix} \mathbf{b}_1 \\ \mathbf{b}_2\\ \vdots \\ \mathbf{b}_n\\ \mathbf{m} \end{pmatrix}=
 \begin{pmatrix} \mathbf{J}_1\\ \mathbf{J}_2\\ \vdots \\ \mathbf{J}_n\\ \mathbf{c} \end{pmatrix}.
 \end{equation*}
 
  By Lemma \ref{induction_lemma} the inverse of $\mathbf{Q}$ is of the form \eqref{inverse} (as long as the inverse of $\mathbf{G}$ exists) and thus it follows that the solution of such system for any $\mathbf{b}_i$ is of the form 
\begin{equation*}
\mathbf{b}_i=\sum_{j=1}^n k_{ij}\mathbf{G}^{-1}\mathbf{J}_j+k_i\mathbf{c}, \hspace{10pt} i=1,...,n,
\label{exp1}
\end{equation*}
which can be rewritten as
\begin{equation}
\mathbf{G}(\mathbf{b}_i-k_i\mathbf{c})=\sum_{j=1}^n k_{ij}\mathbf{J}_j, \hspace{10pt} i=1,...,n,
\label{G}
\end{equation}
where the $k_{ij}$'s and $k_i$'s are constants determined by the $\gamma(h_{ij})$'s such that $k_{ij}=k_{ji}$ for $i,j=1,...,n$, $\sum_{j=1}^n k_{ij}=0$ for all $i$'s and $\sum_{i=1}^n k_i=1$. Using this fact together with \eqref{f3} and \eqref{sigmaij}, we may write the right hand side of \eqref{G} as
\begin{equation}
\sum_{j=1}^n k_{ij}\mathbf{J}_j=\sum_{j=1}^n k_{ij}  \int_{T} (\sigma^2(t)-\gamma(h_{0j})f(t))\mathbf{B}_{\lambda}(t)dt=-\sum_{j=1}^n k_{ij}\gamma(h_{0j})  \int_{T} f(t)\mathbf{B}_{\lambda}(t)dt, \hspace{10pt} i=1,...,n.
\label{kij}
\end{equation}
Thus, using \eqref{kij} and the expression for $\mathbf{G}$ we may now express \eqref{G} as
\begin{equation*}
\int_{T} f(t)\mathbf{B}_{\lambda}(t)\mathbf{B}_{\lambda}^{\intercal}(t)(\mathbf{b}_i-k_i\mathbf{c})dt=-\sum_{j=1}^n k_{ij}\gamma(h_{0j})  \int_{T} f(t)\mathbf{B}_{\lambda}(t)dt
\end{equation*}
or equivalently,
\begin{equation}
\label{2besolved}
\int_{T} f(t)\mathbf{B}_{\lambda}(t)   \mathbf{B}_{\lambda}^{\intercal}(t)\Big( \mathbf{b}_i-(k_i+\sum_{j=1}^n k_{ij}\gamma(h_{0j}))\mathbf{c}\Big)dt=\mathbf{0}, \hspace{10pt} i=1,...,n,
\end{equation}
where we in the last expression used the fact that $\mathbf{B}_{\lambda}^{\intercal}(t)\mathbf{c}$ must equal 1 for all values of $t\in T$ (the unbiasedness constraint). We now see that the above equation holds if $\mathbf{b}_i$ equals $(k_i+\sum_{j=1}^n k_{ij}\gamma(h_{0j}))\mathbf{c}$ and from the assumption of $\mathbf{G}^{-1}$'s existence we also have that this solution is the only one. Thus, the weights must equal
\begin{equation*}
\lambda_i(t)=\mathbf{b}^{\intercal}_i\mathbf{B}_{\lambda}(t)= (k_i+\sum_{j=1}^n k_{ij}\gamma(h_{0j}))\mathbf{c}^{\intercal} \mathbf{B}_{\lambda}(t)=k_i+\sum_{j=1}^n k_{ij}\gamma(h_{0j})=\lambda_i, 
\end{equation*} 
for all values of $t\in T$ and $i=1,...,n$, and consequently, the PWFK predictor coincides with the OKFD predictor.

\end{proof}

\section{Tables of the simulation study}
\label{AppendixTables}
Tables \ref{tab1}, \ref{tab3}, \ref{tab4} and \ref{tab6} contain information about the prediction performance in terms of mean squared prediction errors (MSPEs) for the simulated cases 1-18 for small and large sample sizes with and without a deterministic trend, respectively. The smallest overall MSPE for each case is highlighted in red. The numbers in parentheses represent the average computational time in seconds over the corresponding estimated models and replications. The column \#Times represents the number of times, out of the 100 realisations, that OKFD had lower (minimum) MSPE than the Sp.T. separable model. The last column shows p-values from two-sided paired t-tests comparing the overall MSPEs between the OKFD and the Sp.T. separable models.

\begin{table}[!htbp]
\scriptsize
\caption{Simulated data without deterministic time trend, small sample size.}

\centering
\begin{tabular}{ScScScc|cccc|cc}

   \hline\hline
\multicolumn{4}{c|}{Generated data} & \multicolumn{4}{c|}{overall MSPE} & \multicolumn{2}{c}{Comparison}\\

  \multicolumn{1}{c}{Scenario} & \multicolumn{1}{c}{Type}  & $\alpha$ & $\beta$ & OKFD & Sp.T. Separable & Sp.T. Product-sum & Sp.T. Metric & \#Times & p-value \\
  \hline \addlinespace

 1& \multicolumn{1}{c}{\multirowcell{12}{\rotatebox[origin=c]{90}{Separable}}} & & 0.1 & {\bf\textcolor{red}{0.068}} (0.2)  & 0.068 (7.1)  & 0.070 (9.8) & 0.106 (6.1) & 45 & 0.176\\
                                                    2& & 0.1 & 1 & 0.065 (0.2)  &  {\bf\textcolor{red}{0.065}} (8.4)  & 0.069 (10.8) & 0.080 (6.3)  & 39 & 0.561\\
                                                    3& & & 10 & {\bf\textcolor{red}{0.063}} (0.2)  & 0.064 (6.2)& 0.065 (14.6)  & 0.070 (6.3) & 37 & 0.409 \\ \addlinespace
                                                    
                                                    4& & & 0.1 & {\bf\textcolor{red}{0.135}} (0.2)   & 0.145 (7.0)  & 0.150 (12.7)  & 0.204 (6.4) & 68 &  $<$0.001  \\
                                                    5& & 0.5 & 1 & {\bf\textcolor{red}{0.135}} (0.2)   & 0.139 (8.6)  & 0.147 (14.6)  & 0.196 (6.1)  & 56 &  $<$0.001  \\
                                                    6& & & 10 & {\bf\textcolor{red}{0.134}} (0.2)  &  0.139 (7.3)  & 0.154 (17.3) & 0.154 (6.3)  & 53 & 0.204  \\ \addlinespace
                                                      
                                                    7& & & 0.1 &  {\bf\textcolor{red}{0.377}} (0.2)  & 0.400 (6.2)  & 0.395 (15.2) & 0.452 (6.2)  & 59 &  $<$0.001\\
                                                    8& & 2 & 1 &  {\bf\textcolor{red}{0.356}} (0.2)   & 0.386 (8.0)  & 0.399 (16.0)  & 0.476 (6.2)  & 67 & $<$0.001 \\
                                                    9& & & 10 & {\bf\textcolor{red}{0.365}} (0.2)  & 0.378 (7.8)  & 0.436 (17.6)  & 0.421 (6.2)  & 62 &  $<$0.001  \\ \addlinespace

\hline\addlinespace
 10& \multicolumn{1}{c}{\multirowcell{12}{\rotatebox[origin=c]{90}{Non Separable}}} &  & 0.1 &  {\bf\textcolor{red}{0.063}} (0.2)   & 0.063 (6.6) & 0.067 (9.0)& 0.105 (6.1) & 48 & 0.766 \\
                                                      11& & 0.1 & 1 & 0.063 (0.2)  & {\bf\textcolor{red}{0.063}} (8.5) & 0.066 (9.6) & 0.100 (6.3) & 35 & 0.755 \\
                                                      12& & & 10 & {\bf\textcolor{red}{0.063}} (0.2) & 0.065 (6.2) & 0.068 (10.4) & 0.090 (6.2) & 55 & 0.013   \\ \addlinespace
                                                      
                                                      13& & & 0.1 &  {\bf\textcolor{red}{0.132}} (0.2) & 0.144 (6.3)  & 0.153 (13.0) & 0.199 (6.3) &  64 &  $<$0.001 \\
                                                      14& & 0.5 & 1 & {\bf\textcolor{red}{0.132}} (0.2) & 0.140 (8.2)  & 0.150 (14.1) & 0.216 (5.9)  & 65 & $<$0.001  \\
                                                      15& & & 10 & {\bf\textcolor{red}{0.139}} (0.2) & 0.144 (9.0)  & 0.157 (16.6)  & 0.195 (6.4) & 62 &  $<$0.001 \\ \addlinespace
                                                      
                                                      16& & & 0.1 & {\bf\textcolor{red}{0.363}} (0.2)   & 0.398 (5.9)  & 0.379 (14.9) & 0.430 (6.1) & 67 &  $<$0.001  \\
                                                      17& & 2 & 1 & {\bf\textcolor{red}{0.365}} (0.2)   & 0.422 (7.1)  & 0.396 (16.0)  & 0.481 (6.2)  & 72 & $<$0.001 \\
                                                      18& & & 10 & {\bf\textcolor{red}{0.362}} (0.2) & 0.387 (8.2) & 0.385 (16.8) & 0.532 (6.5) & 70 &  $<$0.001  \\ \addlinespace

\hline

\end{tabular}
\label{tab1}
\normalsize
\end{table}

\begin{table}[!htbp]
\scriptsize
\caption{Simulated data without deterministic time trend, large sample size.}
\centering
\begin{tabular}{ScScScc|cc|cc}


   \hline\hline
\multicolumn{4}{c|}{Generated data} & \multicolumn{2}{c|}{overall MSPE} & \multicolumn{2}{c}{Comparison}\\
  \multicolumn{1}{c}{Scenario} & \multicolumn{1}{c}{Type}  & $\alpha$ & $\beta$ & OKFD & Sp.T. Separable & \#Times & p-value \\
  \hline \addlinespace

 1& \multicolumn{1}{c}{\multirowcell{12}{\rotatebox[origin=c]{90}{Separable}}} & & 0.1 & 0.051 (8.7) &{\bf\textcolor{red}{0.050}} (147.9) & 10 & $<$0.001\\
                                                    2& & 0.1 & 1 & 0.055 (8.6)   & {\bf\textcolor{red}{0.053}} (139.0) &  8 & $<$0.001\\
                                                    3& & & 10 & 0.054 (8.6)  & {\bf\textcolor{red}{0.051}} (125.8)  &  3 & $<$0.001\\ \addlinespace
                                                    
                                                    4& & & 0.1 & {\bf\textcolor{red}{0.078}} (9.9) & 0.080 (153.0) & 24 & 0.029  \\
                                                    5& & 0.5 & 1 & 0.079 (9.9)  & {\bf\textcolor{red}{0.078}} (154.5)  & 11 & 0.064 \\
                                                    6& & & 10 & 0.081 (10.0)  & {\bf\textcolor{red}{0.079}} (139.9) & 11 &  $<$0.001 \\ \addlinespace
                                                      
                                                    7& & & 0.1 & {\bf\textcolor{red}{0.164}} (9.6)  & 0.168 (146.4)  &  46 & 0.207\\
                                                    8& & 2 & 1 & {\bf\textcolor{red}{0.160}} (9.7)  & 0.161 (147.9)  &  38 & 0.147 \\
                                                    9& & & 10 &  0.167 (9.7)  & {\bf\textcolor{red}{0.167}} (141.3)  &  25 & 0.846 \\ \addlinespace

\hline\addlinespace
 10& \multicolumn{1}{c}{\multirowcell{12}{\rotatebox[origin=c]{90}{Non Separable}}} & & 0.1 & 0.053 (8.7)  & {\bf\textcolor{red}{0.052}} (150.1) & 4 & $<$0.001 \\
                                                      11& & 0.1 & 1 & 0.053 (8.7) & {\bf\textcolor{red}{0.051}} (146.5) & 2 & $<$0.001  \\
                                                      12& & & 10 & 0.054 (8.6)  & {\bf\textcolor{red}{0.050}} (134.6) &  3 &  $<$0.001 \\ \addlinespace
                                                      
                                                      13& & & 0.1 &  {\bf\textcolor{red}{0.076}} (9.3)   & 0.077 (156.4)  &  20 & 0.138  \\
                                                      14& & 0.5 & 1 & 0.078 (9.6)  & {\bf\textcolor{red}{0.077}} (163.3)   & 14 & 0.431 \\
                                                      15& & & 10 & 0.078 (10.3) &  {\bf\textcolor{red}{0.075}} (177.5)  &  14 & $<$0.001 \\ \addlinespace
                                                      
                                                      16& & & 0.1 &  {\bf\textcolor{red}{0.160}} (9.7)  & 0.175 (145.6) & 41 & 0.038 \\
                                                      17& & 2 & 1 & {\bf\textcolor{red}{0.161}} (9.7)   & 0.162 (147.7)  &  22 & 0.516 \\
                                                      18& & & 10 & 0.165 (9.7)  & {\bf\textcolor{red}{0.164}} (148.3)  & 17 & 0.723  \\ \addlinespace

\hline
\end{tabular}
\label{tab3}
\normalsize
\end{table}

\newpage
\begin{table}[!htbp]
\scriptsize
\caption{Simulated data with deterministic time trend, small sample size.}

\centering
\begin{tabular}{ScScScc|cccc|cc}

   \hline\hline
\multicolumn{4}{c|}{Generated data} & \multicolumn{4}{c|}{overall MSPE} & \multicolumn{2}{c}{Comparison}\\

  \multicolumn{1}{c}{Scenario} & \multicolumn{1}{c}{Type}  & $\alpha$ & $\beta$ & OKFD & Sp.T. Separable & Sp.T. Product-sum & Sp.T. Metric & \#Times & p-value \\
  \hline \addlinespace

 1& \multicolumn{1}{c}{\multirowcell{12}{\rotatebox[origin=c]{90}{Separable}}} & & 0.1 & 0.066 (0.2) & {\bf\textcolor{red}{0.066}} (9.3) & 0.068 (12.6)& 0.072 (6.2) & 31 & 0.107  \\
                                                    2& & 0.1 & 1 & 0.063 (0.2) &{\bf\textcolor{red}{0.063}} (8.8)& 0.066 (11.9) & 0.070 (6.1) & 29  & 0.004\\
                                                    3& & & 10 & 0.066 (0.2)& {\bf\textcolor{red}{0.066}} (6.7) & 0.068 (15.5) & 0.066 (6.3) & 32 & 0.441  \\ \addlinespace                                                    
                                                      
                                                    4& & & 0.1 &  {\bf\textcolor{red}{0.128}} (0.2) & 0.134 (9.7) & 0.138 (14.6) & 0.149 (6.4) & 69 & $<$0.001   \\
                                                    5& & 0.5 & 1 & {\bf\textcolor{red}{0.135}} (0.2)  & 0.140 (9.9)  & 0.148 (14.5) & 0.146 (5.9) & 63 & 0.029  \\
                                                    6& & & 10 & {\bf\textcolor{red}{0.137}} (0.2) & 0.138 (7.8)  & 0.159 (17.3) & 0.146 (6.4) & 49 & 0.022  \\ \addlinespace
                                                      
                                                    7& & & 0.1 &  {\bf\textcolor{red}{0.380}} (0.2)   & 0.398 (8.2) & 0.413 (15.5) & 0.457 (6.2) & 73 & $<$0.001  \\
                                                    8& & 2 & 1 &   {\bf\textcolor{red}{0.377}} (0.2)  & 0.400 (9.4) & 0.422 (15.6) & 0.430 (5.9) & 65 & $<$0.001 \\
                                                    9& & & 10 & {\bf\textcolor{red}{0.387}} (0.2)  & 0.405 (8.1)  & 0.442 (17.6) & 0.415 (6.2) & 73 &  $<$0.001 \\ \addlinespace

\hline\addlinespace
 10& \multicolumn{1}{c}{\multirowcell{12}{\rotatebox[origin=c]{90}{Non Separable}}} &  & 0.1 & {\bf\textcolor{red}{0.063}} (0.2)  & 0.063 (9.6)  & 0.064 (12.7) & 0.066 (6.3) & 34 & 0.598 \\

                                                      11& & 0.1 & 1 & 0.065 (0.2) & {\bf\textcolor{red}{0.064}} (9.0) & 0.066 (11.1) & 0.066 (6.2) & 33 &  0.009 \\
                                                      12& & & 10 & 0.066 (0.2) &   {\bf\textcolor{red}{0.065}} (6.7)  & 0.071 (13.2)& 0.069 (6.3) & 30 & 0.049 \\ \addlinespace
                                                      
                                                      13& & & 0.1 &  {\bf\textcolor{red}{0.135}} (0.2)   & 0.139 (9.7) & 0.148 (14.5) & 0.153 (6.5) & 63 & $<$0.001 \\
                                                      14& & 0.5 & 1 & {\bf\textcolor{red}{0.133}} (0.2) & 0.136 (10.0)  & 0.146 (14.2) & 0.153 (5.9) & 67 & $<$0.001   \\
                                                      15& & & 10 &   {\bf\textcolor{red}{0.134}} (0.2) & 0.136 (8.9) & 0.148 (16.6) & 0.142 (6.2) & 58 &  $<$0.001  \\ \addlinespace
                                                      
                                                      16& & & 0.1 & {\bf\textcolor{red}{0.362}} (0.2)  & 0.375 (8.3)   & 0.384 (15.4) & 0.447 (6.4) & 69 &  $<$0.001\\ 
                                                      17& & 2 & 1 &   {\bf\textcolor{red}{0.360}} (0.2)  & 0.377 (9.4) & 0.384 (15.9) & 0.431 (6.1) & 74 & $<$0.001 \\
                                                      18& & & 10 & {\bf\textcolor{red}{0.377}} (0.2)  &  0.402 (8.9)  & 0.412 (16.8)  & 0.423 (6.1) & 81 &  $<$0.001\\ \addlinespace

\hline

\end{tabular}
\label{tab4}
\normalsize
\end{table}

\begin{table}[!htbp]
\scriptsize
\caption{Simulated data with deterministic time trend, large sample size.}
\centering
\begin{tabular}{ScScScc|cc|cc}


   \hline\hline
\multicolumn{4}{c|}{Generated data} & \multicolumn{2}{c|}{overall MSPE} & \multicolumn{2}{c}{Comparison}\\
  \multicolumn{1}{c}{Scenario} & \multicolumn{1}{c}{Type}  & $\alpha$ & $\beta$ & OKFD & Sp.T. Separable & \#Times & p-value \\
  \hline \addlinespace

 1& \multicolumn{1}{c}{\multirowcell{12}{\rotatebox[origin=c]{90}{Separable}}} & & 0.1 & 0.054 (8.7) & {\bf\textcolor{red}{0.053}} (151.7) & 10 & $<$0.001\\
                                                    2& & 0.1 & 1 & 0.053 (8.7)   & {\bf\textcolor{red}{0.051}} (143.9) &  8 & $<$0.001\\
                                                    3& & & 10 & 0.054 (8.7)  & {\bf\textcolor{red}{0.051}} (131.9)  &  3 & $<$0.001\\ \addlinespace
                                                    
                                                    4& & & 0.1 & {\bf\textcolor{red}{0.074}} (9.9) & 0.076 (158.6) & 24 & 0.016  \\
                                                    5& & 0.5 & 1 & 0.077 (9.9)  & {\bf\textcolor{red}{0.077}} (159.5)  & 11 & 0.465 \\
                                                    6& & & 10 & 0.080 (9.9)  & {\bf\textcolor{red}{0.077}} (146.3) & 11 &  $<$0.001 \\ \addlinespace
                                                      
                                                    7& & & 0.1 & {\bf\textcolor{red}{0.156}} (9.6)  & 0.164 (150.8)  &  46 & 0.151\\
                                                    8& & 2 & 1 & {\bf\textcolor{red}{0.163}} (9.7)  & 0.169 (150.5)  &  38 & 0.104 \\
                                                    9& & & 10 &  {\bf\textcolor{red}{0.165}} (10.0)  & 0.166 (153.1)  &  25 & 0.479 \\ \addlinespace

\hline\addlinespace
 10& \multicolumn{1}{c}{\multirowcell{12}{\rotatebox[origin=c]{90}{Non Separable}}} & & 0.1 & 0.053 (8.7)  & {\bf\textcolor{red}{0.051}} (151.6) & 4 & $<$0.001 \\
                                                      11& & 0.1 & 1 & 0.053 (8.7) & {\bf\textcolor{red}{0.050}} (150.0) & 2 & $<$0.001  \\
                                                      12& & & 10 & 0.055 (8.6)  & {\bf\textcolor{red}{0.051}} (134.1) &  3 &  $<$0.001 \\ \addlinespace
                                                      
                                                      13& & & 0.1 &  {\bf\textcolor{red}{0.077}} (10.3)   & 0.078 (190.2)  &  20 & 0.051  \\
                                                      14& & 0.5 & 1 & 0.077 (9.9)  & {\bf\textcolor{red}{0.076}} (159.8)   & 14 & 0.288 \\
                                                      15& & & 10 & 0.079 (10.0) &  {\bf\textcolor{red}{0.077}} (155.8)  &  14 & $<$0.001 \\ \addlinespace
                                                      
                                                      16& & & 0.1 &  {\bf\textcolor{red}{0.163}} (9.7)  & 0.166 (150.7) & 41 & 0.029 \\
                                                      17& & 2 & 1 & {\bf\textcolor{red}{0.163}} (9.7)   & 0.167 (152.6)  &  22 & 0.186 \\
                                                      18& & & 10 & 0.163 (9.7)  & {\bf\textcolor{red}{0.162}} (152.5)  & 17 & 0.213  \\ \addlinespace

\hline
\end{tabular}
\label{tab6}
\normalsize
\end{table}

\newpage
Tables \ref{Nbasisstat} and \ref{Nbasisnonstat} contain information about the number of basis functions used in the OKFD models concerning the stationary (corresponding to isotropic Sp.T. processes with separable and non-separable covariance function) and non-stationary scenarios for the considered sample sizes.

\begin{table}[!htbp]
\footnotesize
\caption{Stationary}
\centering
\begin{tabular}{cccc}

   \hline\hline

  \multicolumn{1}{c}{} & \multicolumn{3}{c}{Sample sizes} \\
  &Small&Medium&Large\\
  \hline \addlinespace                                         
  Fourier&5,7,9,11&5,15,25,35,45,47,49&5,15,25,35,45,47,49\\
  B-splines&5,6,7,8,9,10,11,12&5,15,25,35,45,47,49&5,15,25,35,45,47,49\\ \addlinespace
  Total&12&14&14\\
  \addlinespace\hline
  
\end{tabular}
\label{Nbasisstat}
\normalsize
\end{table}

\begin{table}[!htbp]
\footnotesize
\caption{Non-stationary}

\centering
\begin{tabular}{cccc}

   \hline\hline

  \multicolumn{1}{c}{} & \multicolumn{3}{c}{Sample sizes} \\
  &Small&Medium&Large\\
  \hline \addlinespace                                         
  Fourier&5,7,9,11&5,7,9,11,13,15,17,19,21,23,25,35,45,47,49&5,7,9,11,13,15,17,19,21,23,25,35,45,47,49\\
  B-splines&5,6,7,8,9,10,11,12&5,7,9,11,13,15,17,19,21,23,25,35,45,47,49&5,7,9,11,13,15,17,19,21,23,25,35,45,47,49\\ \addlinespace
  Total&12&30&30\\
  \addlinespace\hline
  
\end{tabular}
\label{Nbasisnonstat}
\normalsize
\end{table}



\newpage

\bibliographystyle{apalike} 
\bibliography{newRef}

\end{document}